\documentclass[12pt]{article}

\usepackage{amssymb}
\usepackage[T1]{fontenc}
\usepackage[utf8]{inputenc}
\usepackage[intlimits]{amsmath}
\usepackage{enumerate}
\usepackage[all]{xy}

\usepackage{amsthm}

\newcommand{\be}{\begin{equation}}
\newcommand{\ee}{\end{equation}}

\newtheorem{theorem}{Theorem}
\newtheorem{corollary}{Corollary}

\newtheorem{example}{Example}

\usepackage{multirow}




\usepackage[colorlinks]{hyperref}

\begin{document}

\begin{center}\Large\bf
A new look at Lie algebras
\end{center}
\begin{center}\large
Alina Dobrogowska, Grzegorz Jakimowicz
\end{center}

\begin{center}
Faculty of Mathematics, University of Białystok, Ciołkowskiego 1M, 15-245 Białystok, Poland
\end{center}
\begin{center}
E-mail: alina.dobrogowska@uwb.edu.pl, g.jakimowicz@uwb.edu.pl
\end{center}

\begin{abstract}
We present a new look at description of real finite-dimensional Lie algebras. 
The basic element turns out to be a pair $(F,v)$ consisting of a linear mapping $F\in End(V)$ and its eigenvector $v$.
This pair allows to build a Lie bracket on a dual space to a linear space $V$. 
This algebra is solvable. In particular, when $F$ is nilpotent, the Lie algebra is also nilpotent. 
We show that these solvable algebras are the basic bricks of the construction of all other Lie algebras.
Using relations between the Lie algebra, the Lie--Poisson structure and the Nambu bracket, we show that the algebra invariants (Casimir functions) are solutions of an equation which has a geometric sense.
Several examples illustrate the importance of these constructions.
\end{abstract}

{\bf Keywords:}
Lie algebras, nilpotent and solvable Lie algebras, Lie brackets, Poisson brackets, Nambu brackets\\

\section{Introduction}

There is a well-known isomorphism between special orthogonal Lie algebra $\mathfrak{so}(3)$ and $\mathbb{R}^3$.
For the first structure, the Lie bracket is given by the matrix commutator $[X,Y]=XY-YX$ for $X,Y\in \mathfrak{so}(3)$, and for the second by the cross product $\times$ for vectors from $\mathbb{R}^3$. The mapping 
\begin{equation}
X=\left(
\begin{matrix}
0 & -z & y\\
z & 0 & -x\\
-y & x & 0
\end{matrix}
\right)\longmapsto 
v=\left(\begin{matrix} x\\y\\z \end{matrix}\right)
\end{equation}
gives this isomorphism $\left(\mathfrak{so}(3), [\cdot,\cdot]\right)\cong \left(\mathbb{R}^3,\times \right)$.

The main goal of the paper is to show that one can construct a similar isomorphism for any Lie algebra.
We will show that Lie algebras have a lot in common with linear maps, and more precisely with linear maps with a fixed eigenvector. 
We will present an easy construction of a Lie bracket on a linear space $V^*$ from a pair $(F,v)$, where $F\in End(V)$ and $v$ is an eigenvector of $F$. In our considerations, we will restrict ourselves to the linear space $V$ over a field $\mathbb{R}$. It means that we will analyze in detail only real Lie algebras. However, we want to emphasize that the presented formulas also work for vector spaces over the field of complex numbers.

The presented construction seems to us new and at the same time very simple. 
 Moreover, it generates natural connections between algebra, analysis and geometry.
To be precise, we have relationship between objects such as: Lie algebras, eigenproblem for operators, Lie--Poisson brackets, Nambu brackets and finally invariants (Casimir operators) for initial Lie algebras.

The idea of constructing these Lie brackets has its origin in the algebroid structure. The concept of Lie algebroid come from Lie groupoids \cite{Pra}. A Lie algebroid $(A,[\cdot,\cdot],a)$ is a vector bundle $A\longrightarrow M$ over manifold $M$, together with a vector bundle map (anchor map) $a:A\longrightarrow TM$ and Lie bracket $[\cdot,\cdot]:\Gamma (A)\times \Gamma (A)\longrightarrow \Gamma (A)$.
The anchor and  the Lie bracket satisfy the Leibniz rule
\begin{equation}
[\alpha,f\beta]=f[\alpha,\beta]+a(\alpha)(f)\beta
\end{equation}
for all $\alpha,\beta\in  \Gamma (A)$, $f\in C^{\infty}(M)$.
It is a well-known fact that there are one-to-one correspondences between vector bundles with linear Poisson structures and Lie algebroids as well as between vector spaces with linear Poisson structures and Lie algebras.
There are many books and articles about  Lie albebroids or related structures \cite{13, AJ, Zu, GraUrb, J-R, Kos, 2, Ma-O-S, 3, 4, O-J-S, We, Xu}.

The  idea of Lie brackets construction  has  exactly its source in our work \cite{Do-Ja} (see also \cite{AJK1, AJK3}), in which we presented new ideas how to construct algebroid structures on cotangent bundles starting from classical algebroid structures on $TM$ and vector fields satisfying certain assumptions. 
We also found in this paper constructions of families of algebroids on $A^{*}$ when we have an algebroid structure on $A$. 
It is also possible to look at these brackets in the language torision-free and flat connections as described by Balcerzak \cite{Bal}.

It also seems that linking the Lie bracket to eigenproblems of linear operators may be a step towards the classification of Lie algebras. The problem of classifying real Lie algebras is completely solved for Lie algebras up to dimension six, see \cite{Bo-Fe, Ci-Gr-Sch, G, Gr-2, Mac, Mu, Po-Bo-Ne} and the references therein.

The paper is organized as follows. In the beginning of Section \ref{s2} we  recall basic facts about Lie algebras, Poisson brackets and Nambu brackets.
Next sections contain the main results of the article. In Section \ref{s3} we define the Lie bracket on $V^*$ starting from a pair $(F,v)$, where $F\in End(V)$ and $v\in V$ (Theorem \ref{theorem 1}). Moreover, we study the properties of Lie algebras obtained in this way. We present some examples in Section \ref{s4}. The culmination of our research are Theorem \ref{theorem-1} and \ref{theorem-2} in Section \ref{s5},
where we show that every Lie algebra and its invariants can be associated with a collection of linear mappings and their eigenvectors.

\section{Preliminaries and notations}
\label{s2}

In the present section  we give a short review of some basic definitions and properties about Lie algebras, Poisson structures, Nambu brackets.

Let $\mathfrak{g}$ be a real finite-dimensional Lie algebra. A Lie algebra is vector space over a field $\mathbb{R}$ equipped with Lie bracket $[\cdot, \cdot]:\mathfrak{g}\times \mathfrak{g}\longrightarrow \mathfrak{g}$ with is a bilinear, antisymmetric map, which satisfies the Jacobi identity
\begin{equation}
[[x,y],z]+[[z,x],y]+[[y,z],x]=0
\end{equation}
for all $x,y,z\in\mathfrak{g}$.

We say that a linear subspace $\mathfrak{h}$ is an ideal of a Lie algebra $\mathfrak{g}$ when $[\mathfrak{g},\mathfrak{h}]\subseteq  \mathfrak{h}$. Of course the set $[\mathfrak{h},\mathfrak{h}]$ is also an ideal. In particular, $\mathfrak{g}$ is a trivial ideal of $\mathfrak{g}$. Then we define a sequence of ideals (the derived series $\mathfrak{g}^{(0)}\supseteq \mathfrak{g}^{(1)} \supseteq \dots\supseteq \mathfrak{g}^{(i)}\supseteq \dots $)
\begin{equation}
\mathfrak{g}^{(0)}=\mathfrak{g},\quad \mathfrak{g}^{(1)}=[\mathfrak{g},\mathfrak{g}],\quad 
\mathfrak{g}^{(2)}=[\mathfrak{g}^{(1)},\mathfrak{g}^{(1)}], \dots,  
\mathfrak{g}^{(i)}=[\mathfrak{g}^{(i-1)},\mathfrak{g}^{(i-1)}], \dots
\end{equation}
A Lie algebra $\mathfrak{g}$ is called solvable if, for some positive integer $i$, $\mathfrak{g}^{(i)}=0$.
In addition, if we introduce the following sequence of ideals (the lower central series $\mathfrak{g}_{(0)}\supseteq \mathfrak{g}_{(1)} \supseteq \dots\supseteq \mathfrak{g}_{(i)}\supseteq \dots $)
\begin{equation}
\mathfrak{g}_{(0)}=\mathfrak{g},\quad \mathfrak{g}_{(1)}=[\mathfrak{g}_{(0)},\mathfrak{g}],\quad 
\mathfrak{g}_{(2)}=[\mathfrak{g}_{(1)},\mathfrak{g}], \dots,  
\mathfrak{g}_{(i)}=[\mathfrak{g}_{(i-1)},\mathfrak{g}], \dots,
\end{equation}
we say that algebra $\mathfrak{g}$ is called nilpotent if the lower central series terminates $\mathfrak{g}_{(i)}=0$ for some $i\in\mathbb{N}$. Obviously, a nilpotent Lie algebra is also solvable.

Let $M$ be a finite-dimensional smooth manifold. A Poisson bracket $\{\cdot,\cdot\}:C^{\infty}(M)\times C^{\infty}(M)\longrightarrow  C^{\infty}(M)$ is a bilinear, antisymmetric map, which satisfies the Jacobi identity and Leibniz rule 
\begin{align}
& \{\{f,g\},h\}+\{\{h,f\},g\}+\{\{g,h\},f\}=0,\\
& \{f,gh\}=g\{f,h\}+\{f,g\}h
\end{align}
for $f,g,h\in C^{\infty}(M)$. A pair $(M, \{\cdot,\cdot\})$ is called a Poisson manifold. 

One classic example of such a structure is the Lie-Poisson bracket on the dual $\mathfrak{g}^*$ of a finite-dimensional Lie algebra $\mathfrak{g}$
\begin{equation}
\label{bp}
\{f,g\}({\bf x})=\left<{ \bf x}|[df({\bf x}),dg({\bf x})]\right>,\quad f,g\in C^{\infty}\left(\mathfrak{g}^*\right),
\end{equation}
where $df({\bf x}),dg({\bf x})\in \left(\mathfrak{g}^*\right)^*\cong \mathfrak{g}$ and ${\bf x}\in\mathfrak{g}^*$.
It is well known that we have a one-to-one correspondence between the Lie algebra structures on $\mathfrak{g}$ and the linear Poisson structures on $\mathfrak{g}^*$. Let $\{e_1,\dots, e_n\}$ be a basis of $\mathfrak{g}$ and ${\bf x}=(x_1,\dots, x_n)$ be a system of local coordinates on the dual space.
Using the structure constants $c_{ij}^k$ of the Lie algebra $[e_i,e_j]=\sum_{k=1}^n c_{ij}^ke_k$, one can express the Lie-Poisson bracket as a linear function $\{x_i,x_j\}=\sum_{k=1}^n c_{ij}^k x_k$.

Recall also that Casimir functions $c_i$, $i=1,\dots ,k$, for Poisson manifold $(M, \{\cdot,\cdot\})$ are defined by the condition
\begin{equation}
\{c_i,f\}=0 \quad \textrm{for all}\quad f\in C^{\infty}(M).
\end{equation}
In the linear case for the Poisson bracket, these functions correspond to invariants (Casimir operators) of the Lie algebra.

In 1973, Nambu \cite{Nambu} proposed a generalization of the Poisson bracket on $\mathbb{R}^3$ to the Nambu bracket in the form
\begin{equation}
\{f_1,f_2,f_3\}({\bf x})=\dfrac{\partial(f_1,f_2,f_3)}{\partial (x_1,x_2,x_3)}=\sum_{i,j,k=1}^3 
\epsilon_{ijk} \dfrac{\partial f_1}{\partial x_i}({\bf x}) \dfrac{\partial f_2}{\partial x_j}({\bf x}) \dfrac{\partial f_3}{\partial x_k} ({\bf x}),
\end{equation}
where $\epsilon $ is Levi-Civita tensor and $f_1,f_2,f_3\in C^{\infty}(\mathbb{R}^3)$.
 In general, a Nambu bracket
$\{\cdot,\dots,\cdot\}:\underbrace{C^{\infty}(M)\times \dots \times C^{\infty}(M)}_n\longrightarrow  C^{\infty}(M)$ is a $n$-linear, skew-symmetric map, which satisfies the generalized Jacobi identity (fundamental identity) 
\begin{equation}
\{f_1, \dots, f_{n-1},\{g_1,\dots ,g_n\}\}=\sum_{i=1}^n \{g_1, \dots,  \{f_1, \dots, f_{n-1},g_i\}, \dots, g_n\}
\end{equation}
and Leibniz rule (derivation law)
\begin{equation}
\{f_1, \dots, f_{n-1},fg\}=f\{f_1, \dots, f_{n-1},g\}+\{f_1, \dots, f_{n-1},f\}g.
\end{equation}
More information about Nambu structure can be found in \cite{A-L-M-Y, Ch-Ho, Cr-Mo, Gra-Mar, Ho-Ma, Sh, Tak}.

\section{Eigenproblems of operators and Lie brackets}
\label{s3}

In this section we present some constructions of 
a Lie bracket on a space $V^*$ having a pair: linear mapping and its eigenvector. 
We shall examine the properties of the Lie brackets obtained in this way.
We consider when different mappings (and eigenvectors) yield isomorphic structures of Lie algebras on $V^*$. 
Finally, we give a criterion when the linear combination of two brackets generated by two different mappings and their eigenvectors gives again the Lie bracket.

\begin{theorem}
\label{theorem 1}
 If $V$ is a vector space, $F:V \longrightarrow  V$ is a linear map and $ v\in V $ is an eigenvector of the map $F$, then 
$(V^*,[\cdot,\cdot]_{F,v})$, is a Lie algebra, where the Lie bracket is given by
 \begin{equation}
 \label{b1} [\psi,\phi]_{F,v}= \phi(v)F^*(\psi)-\psi(v)F^*(\phi)
\end{equation}
 for $\psi,\phi\in V^*$.
\end{theorem}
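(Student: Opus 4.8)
The plan is to verify directly that the bracket $[\cdot,\cdot]_{F,v}$ defined in \eqref{b1} is bilinear, antisymmetric, and satisfies the Jacobi identity. Bilinearity is immediate from the formula, since $\psi\mapsto\psi(v)$ and $\psi\mapsto F^*(\psi)$ are both linear in $\psi$. Antisymmetry is equally transparent: swapping $\psi$ and $\phi$ in $\phi(v)F^*(\psi)-\psi(v)F^*(\phi)$ produces the negative of the original expression. So the only real content is the Jacobi identity.

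For the Jacobi identity, write $Fv=\lambda v$ for the eigenvalue $\lambda\in\mathbb{R}$, so that $\langle F^*(\psi)\mid v\rangle=\langle\psi\mid Fv\rangle=\lambda\,\psi(v)$; this identity $(F^*\psi)(v)=\lambda\,\psi(v)$ will be the key algebraic fact. First I would compute $[[\psi,\phi]_{F,v},\chi]_{F,v}$ by substituting $[\psi,\phi]_{F,v}=\phi(v)F^*(\psi)-\psi(v)F^*(\phi)$ into \eqref{b1}. This requires evaluating $[\psi,\phi]_{F,v}(v)=\phi(v)(F^*\psi)(v)-\psi(v)(F^*\phi)(v)=\lambda\phi(v)\psi(v)-\lambda\psi(v)\phi(v)=0$. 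This vanishing is the crucial simplification: the "$\,\cdot\,(v)$" coefficient of the outer bracket dies, so
\begin{equation}
[[\psi,\phi]_{F,v},\chi]_{F,v}=-\chi(v)F^*\bigl([\psi,\phi]_{F,v}\bigr)=-\chi(v)\bigl(\phi(v)(F^*)^2(\psi)-\psi(v)(F^*)^2(\phi)\bigr).
\end{equation}
Then I would write out the two cyclic permutations of this expression and add all three together.

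The sum of the three cyclic terms is a combination of the six vectors $(F^*)^2(\psi),(F^*)^2(\phi),(F^*)^2(\chi)$ with scalar coefficients built from $\psi(v),\phi(v),\chi(v)$. Collecting the coefficient of $(F^*)^2(\psi)$, for instance, one gets $-\chi(v)\phi(v)+\phi(v)\chi(v)=0$ (one contribution from the term starting with $[\psi,\phi]$ and one from the term starting with $[\chi,\psi]$, with opposite signs), and similarly for the other two vectors; hence the total vanishes. The main (and essentially only) obstacle is bookkeeping: keeping the signs and the cyclic order straight so that the cancellation is visible. There is no deep difficulty — the eigenvector hypothesis enters exactly once, through $(F^*\psi)(v)=\lambda\psi(v)$, to kill the first-level bracket evaluated at $v$, after which everything collapses to a trivial cancellation. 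I would present the computation compactly, perhaps noting that the argument in fact shows $[\psi,\phi]_{F,v}(v)=0$ for all $\psi,\phi$, i.e.\ $v$ annihilates the derived algebra, a fact worth recording for the solvability discussion that follows.
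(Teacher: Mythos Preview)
Your approach is essentially identical to the paper's: both dismiss bilinearity and antisymmetry as obvious, then verify the Jacobi identity by expanding the double bracket using $(F^*\psi)(v)=\lambda\,\psi(v)$ and observing that the six terms cancel in pairs under the cyclic sum. One small slip worth fixing: from $[\alpha,\chi]_{F,v}=\chi(v)F^*(\alpha)-\alpha(v)F^*(\chi)$ with $\alpha=[\psi,\phi]_{F,v}$ and $\alpha(v)=0$ you get $+\chi(v)F^*(\alpha)$, not $-\chi(v)F^*(\alpha)$; the global sign is harmless for the cancellation, and your observation that $[\psi,\phi]_{F,v}(v)=0$ is exactly the simplification the paper uses (phrased there as $F^*(\phi)(v)=\lambda\phi(v)$), just packaged a bit more conceptually.
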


\begin{proof}
Let $v\in V$ be an eigenvector of a linear map $F$ corresponding to the eigenvalue $\lambda$, 
$F(v)=\lambda v$. The bilinearity and antisymmetry  for the bracket (\ref{b1}) are obvious. All we have to prove is the Jacobi identity. 
Let us take $\psi,\phi,\zeta \in V^*$. We calculate
\begin{equation}[[\psi,\phi]_{F,v},\zeta]_{F,v}=\psi(v)\left(F^*(\phi)(v)F^*(\zeta)-\zeta(v)F^*(F^*(\phi))\right)
\end{equation}
 $$
-\phi(v)\left( F^*(\psi)(v)F^*(\zeta)-\zeta(v)F^*(F^*(\psi))\right)$$
$$=\phi(v)\zeta(v)F^*(F^*(\psi))-\psi(v)\zeta(v)F^*(F^*(\phi)),$$
because $F^*(\phi)(v)=\phi (F(v))=\lambda \phi (v)$. Hence 
\begin{equation}
[[\psi,\phi]_{F,v},\zeta]_{F,v}+[[\zeta,\psi]_{F,v},\phi]_{F,v}+[[\phi,\zeta]_{F,v},\psi]_{F,v}
\end{equation}
 $$=\phi(v)\zeta(v)F^*(F^*(\psi))-\psi(v)\zeta(v)F^*(F^*(\phi))+
\phi(v)\psi(v)F^*(F^*(\zeta)) $$
$$-\phi(v)\zeta(v)F^*(F^*(\psi))+\psi(v)\zeta(v)F^*(F^*(\phi))
-\psi(v)\phi(v)F^*(F^*(\zeta))=0.$$
\end{proof}
If we have two linear mappings $F$ and $G$ that have a common eigenvector $v$ then the following property holds
$[\psi,\phi]_{F,v}+[\psi,\phi]_{G,v}=[\psi,\phi]_{F+G,v}$. 

By introducing the notion of the internal multiplication operator $\iota_v:V^*\longrightarrow \mathbb{R}$ given by $\iota_v\psi=\psi(v)$, 
we can rewrite the Lie bracket (\ref{b1}) in the form
\begin{equation}
[\psi,\phi]_{F,v}= \left(\iota_v\wedge F^*\right) \left(\phi , \psi \right).
\end{equation}
Since we have the canonical isomorphism $V^{**}\cong V$, we will identify the vector $v$ with $\iota_v$.
Moreover, we will use isomorphisms $End(V)\cong V^*\otimes V$ or $End(V^*)\cong V\otimes V^*$ defined by
$(\psi\otimes v)(w)=\psi(w) v$, $(v\otimes \psi)(\phi)=\phi(v)\psi$ for $\psi,\phi\in V^*$, $v,w\in V$, respectively.
These formulas will be useful in the next theorems.

\begin{theorem}
 If $F:V \longrightarrow  V$ is a linear map, $v\in V$ and 
$(V^*,[\cdot,\cdot]_{F,v})$ is a Lie algebra,  then one of the following conditions is satisfied:
\begin{enumerate}
\item $v$ is an eigenvector of the map $F$;
\end{enumerate}
or
\begin{enumerate}
\item[2.] $F^*=\iota_v \otimes \rho  + \iota_{Fv} \otimes  \eta$, where $\rho,\eta\in V^*$.
\end{enumerate}
\end{theorem}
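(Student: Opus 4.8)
The plan is to start from the assumption that $[\cdot,\cdot]_{F,v}$ as defined by \eqref{b1} satisfies the Jacobi identity, and to squeeze out the stated dichotomy. First I would reuse the computation already performed in the proof of Theorem \ref{theorem 1}, but without assuming $Fv = \lambda v$. Expanding $[[\psi,\phi]_{F,v},\zeta]_{F,v}$ directly from \eqref{b1}, each double bracket produces two kinds of terms: terms in which $F^*$ is applied twice (to $\psi$ or $\phi$), and ``mixed'' terms carrying a factor $F^*(\phi)(v)=\phi(Fv)$ or $F^*(\psi)(v)=\psi(Fv)$. In the eigenvector case these mixed terms collapsed because $\phi(Fv)=\lambda\phi(v)$; in general they do not. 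So I would write the full cyclic sum $J(\psi,\phi,\zeta) := [[\psi,\phi]_{F,v},\zeta]_{F,v}+[[\zeta,\psi]_{F,v},\phi]_{F,v}+[[\phi,\zeta]_{F,v},\psi]_{F,v}$ as an explicit element of $V^*$ and collect terms by type.

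Second, I would observe that the $F^{*}\!\circ\! F^{*}$ terms cancel in the cyclic sum exactly as in Theorem \ref{theorem 1}'s proof (that cancellation never used the eigenvector property). What survives is a combination of the mixed terms, which — after collecting — should take the shape $J(\psi,\phi,\zeta) = \psi(v)\,\bigl(\zeta(Fv)F^*(\phi)-\phi(Fv)F^*(\zeta)\bigr) + (\text{cyclic})$, i.e. essentially $J = \bigl(\iota_v\wedge \iota_{Fv}\wedge F^*\bigr)$ evaluated on $(\psi,\phi,\zeta)$, viewed as a $V^*$-valued alternating form. The Jacobi identity is therefore equivalent to the vanishing of this trilinear alternating expression for all $\psi,\phi,\zeta\in V^*$.

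Third, I would analyze when that alternating expression vanishes identically. It vanishes automatically if $\iota_v$ and $\iota_{Fv}$ are linearly dependent as functionals on $V^*$, i.e. if $v$ and $Fv$ are linearly dependent in $V$; since $v\ne 0$ (an eigenvector, or at least we may assume $v\neq0$, the case $v=0$ giving the trivial bracket which is a degenerate instance of case~2 with $\rho=\eta=0$), this means $Fv=\lambda v$ for some $\lambda$, which is case~1. If instead $v$ and $Fv$ are linearly independent, I would complete $\{v,Fv\}$ to a basis of $V$ and test the vanishing of $J$ on coordinate functionals: choosing $\psi,\phi,\zeta$ among the dual basis vectors forces $F^*$ to act trivially on every basis direction outside $\mathrm{span}\{\iota_v,\iota_{Fv}\}$ in the relevant slot, which is precisely the statement that $F^*$ factors through the two functionals $\iota_v$ and $\iota_{Fv}$, i.e. $F^* = \iota_v\otimes\rho + \iota_{Fv}\otimes\eta$ for suitable $\rho,\eta\in V^*$ — case~2. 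Conversely one checks both cases indeed give Jacobi, closing the equivalence.

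The main obstacle I anticipate is purely bookkeeping: carrying the mixed terms through the cyclic sum without the eigenvalue simplification and recognizing the result as a clean wedge $\iota_v\wedge\iota_{Fv}\wedge F^*$ rather than an unstructured mess; once that normal form is in hand, the linear-algebra case split (dependent versus independent $v, Fv$, then pinning down the rank/image of $F^*$ from the vanishing condition) is routine. One should also be slightly careful with the degenerate sub-cases ($v=0$, or $F^*$ of low rank) to confirm they fall under the stated alternatives.
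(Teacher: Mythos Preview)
Your proposal is correct and follows essentially the same route as the paper: compute the Jacobiator without assuming $Fv=\lambda v$, observe that the $(F^*)^2$ terms cancel regardless, identify the surviving expression as $(\iota_v\wedge\iota_{Fv}\wedge F^*)(\psi,\phi,\zeta)$, and then split on whether $v,Fv$ are linearly dependent, using a basis extension and dual-basis test vectors in the independent case to force $F^*$ to vanish on the annihilator of $\mathrm{span}\{v,Fv\}$. The paper's proof is terser---it states the wedge form and says ``this implies the above subcases,'' pushing the basis argument you outline into the subsequent Remark~1---but the substance is the same.
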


\begin{proof}
For $F=0$ it is obvious. Suppose now that $F$ is not equal zero. 
After simple calculation we obtain  
\begin{equation}
\label{77}
[[\psi,\phi]_{F,v},\zeta]_{F,v}+[[\zeta,\psi]_{F,v},\phi]_{F,v}+[[\phi,\zeta]_{F,v},\psi]_{F,v}=
\end{equation}
$$=\psi(v)\phi(Fv)F^*(\zeta)+\zeta(Fv)\phi(v)F^*(\psi)+\psi(Fv)\zeta(v)F^*(\phi)$$
$$-\psi(Fv)\phi(v)F^*(\zeta)-\zeta(v)\phi(Fv)F^*(\psi)-\psi(v)\zeta(Fv)F^*(\phi)$$
$$= \bigg( \iota_v \wedge \iota_{Fv} \wedge F^* \bigg) (\psi,\phi,\zeta).$$
This implies the above subcases. This ends the proof.
\end{proof}

{\bf Remark 1.}
We will study the second possibility in more detail.
Suppose that  vectors $v$ and $Fv$ are linearly independent. Then let  $\{e_1=v,e_2=Fv, e_3,...,e_n\}$  be a basis of $V$ and  $\{e_1^*,e_2^*, e_3^*,...,e_n^*\}$ be the dual basis in $V^*$. If we put $\psi=e_1^*, \phi=e_2^*, \zeta =e_i^*$ we obtain from (\ref{77})
\begin{equation}
F^*(e_i^*-e_i^*(e_1)e_1^*-e_i^*(e_2)e_2^*)=0.
\end{equation}
It implies that $F^*(e_i^*)=0$ for $i=3,...,n$. 
Therefore, the mapping in this basis has the form
\begin{equation}
F=\left(
\begin{matrix}
0 & a_{12} & a_{13} & \hdots & a_{1n}\\
1 & a_{22} & a_{23} & \hdots & a_{2n}\\
0 & 0      & 0      & \hdots & 0\\
\vdots & \vdots & \vdots & \ddots & \vdots\\
0 & 0      & 0      & \hdots & 0
\end{matrix}
\right),
\end{equation}
where $a_{1i}, a_{2i}\in\mathbb{R}$, $i=2,\dots, n$. In this case the only non-zero bracket of basis elements for bracket (\ref{b1}) is the following
\begin{equation}
[e_1^*,e_2^*]_{F,e_1}=-e_1^*-a_{22}e_2^*-\ldots -a_{2n}e_n^*.
\end{equation}
We recognize that it is the Lie algebra structure related to the Lie algebra $\mathfrak{g}_{2,1}$ (more exactly it is isomorphic with the direct sum $\mathfrak{g}_{2,1}\oplus \left< e_3^*,...,e_n^*\right>$). An isomorphism is given by the mapping $\{e_1^*,e_2^*, e_3^*,...,e_n^*\}\mapsto \{e_2^*,e_1^*+a_{22}e_2^*+\ldots +a_{2n}e_n^*, e_3^*,...,e_n^*\}$. 

Even though this algebra was obtained using vector $v$ which was not an eigenvector of $F$, it can also be obtained from the mapping $F=diag (0,-1,0\dots, 0)$ and the eigenvector $v=e_1$. Thus without the loss of generality from now on we will always assume that $v$ is an eigenvector for $F$.

\bigskip

It is easy to prove that 
this construction of Lie brackets gives solvable Lie algebras.
\begin{theorem}
Let $[\cdot,\cdot]_{F,v}$ be given by (\ref{b1}), then the Lie algebra $(V^*,[\cdot,\cdot]_{F,v})$ is solvable.
\end{theorem}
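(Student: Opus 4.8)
The plan is to prove the stronger statement that the second derived algebra $(V^*)^{(2)}$ already vanishes, so that $(V^*,[\cdot,\cdot]_{F,v})$ is solvable of derived length at most two. The one observation that makes everything collapse is that, although $\mathfrak{g}^{(1):=}[V^*,V^*]_{F,v}$ obviously lies in $\mathrm{im}(F^*)$, the sharper fact is that $\mathfrak{g}^{(1)}$ is contained in the hyperplane $\ker\iota_v=\{\chi\in V^*:\chi(v)=0\}$, and the bracket is identically zero on that hyperplane.

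First I would record the single computation needed. Writing $\lambda$ for the eigenvalue, $F(v)=\lambda v$, and using exactly the identity $(F^*\chi)(v)=\chi(Fv)=\lambda\,\chi(v)$ already exploited in the proof of Theorem~\ref{theorem 1}, one evaluates an arbitrary bracket on $v$:
\[
[\psi,\phi]_{F,v}(v)=\phi(v)\,(F^*\psi)(v)-\psi(v)\,(F^*\phi)(v)=\lambda\big(\phi(v)\psi(v)-\psi(v)\phi(v)\big)=0 .
\]
By bilinearity this gives $\mathfrak{g}^{(1)}=[V^*,V^*]_{F,v}\subseteq\ker\iota_v$. Then, for any $\alpha,\beta\in\mathfrak{g}^{(1)}$ we have $\alpha(v)=\beta(v)=0$, so directly from the defining formula \eqref{b1},
\[
[\alpha,\beta]_{F,v}=\beta(v)\,F^*(\alpha)-\alpha(v)\,F^*(\beta)=0 .
\]
Hence $\mathfrak{g}^{(2)}=[\mathfrak{g}^{(1)},\mathfrak{g}^{(1)}]_{F,v}=0$, which is the claim. (The degenerate case $v=0$ is trivial, since then the bracket is identically zero.)

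As for the main obstacle: there is essentially none once the right quantity is tracked. The natural first attempt is to follow the derived series through the subspaces $\mathrm{im}\big((F^*)^k\big)$ — indeed $\mathfrak{g}^{(1)}\subseteq\mathrm{im}(F^*)$ and more generally $\mathfrak{g}^{(k)}\subseteq\mathrm{im}\big((F^*)^k\big)$ — but this chain terminates only when $F$ is nilpotent (which is the content of the separate nilpotency statement), so it does not decide solvability in the general case. Following the functional $\iota_v$ instead of the image of $F^*$ is precisely what forces the series to stop after two steps. If one wants a slightly more structural phrasing of the same point, one can note the identity $[\psi,\phi]_{F,v}=F^*\big(\phi(v)\psi-\psi(v)\phi\big)$ with $\phi(v)\psi-\psi(v)\phi\in\ker\iota_v$, whence $\mathfrak{g}^{(1)}=F^*(\ker\iota_v)$ for $v\neq 0$, and the inclusion $\mathfrak{g}^{(1)}\subseteq\ker\iota_v$ is then immediate from $(F^*\chi)(v)=\lambda\,\chi(v)=0$ for $\chi\in\ker\iota_v$.
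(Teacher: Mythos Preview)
Your proof is correct and reaches the same conclusion as the paper, namely $\mathfrak{g}^{(2)}=0$, but the route is cleaner. The paper proceeds by brute force: it writes out $[[\psi,\phi]_{F,v},[\varphi,\zeta]_{F,v}]_{F,v}$ as an eight-term sum in $(F^*)^2(\psi),(F^*)^2(\phi),(F^*)^2(\varphi),(F^*)^2(\zeta)$ with coefficients involving $\chi(v)$ and $\chi(Fv)$, and then observes (implicitly via $Fv=\lambda v$) that the terms cancel pairwise. Your argument isolates the structural reason behind that cancellation: every bracket already vanishes on $v$, so $\mathfrak{g}^{(1)}\subseteq\ker\iota_v$, and the bracket is identically zero on $\ker\iota_v$ straight from the defining formula. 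This is the same mechanism --- in the paper's expansion the two scalar factors $A(v)$ and $B(v)$ are exactly what your one-line computation shows to be zero --- but your presentation makes the point without any expansion and yields the bonus information that $\ker\iota_v$ is an abelian ideal containing the derived subalgebra.
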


\begin{proof}
To prove above, we note that
\begin{equation}
[[\psi,\phi]_{F,v},[\varphi ,\zeta  ]_{F,v}]_{F,v}
\end{equation}
$$
=\phi(v)\zeta(v)\left( \varphi(F(v))(F^*)^2(\psi) - \psi(F(v))(F^*)^2(\varphi) \right)
$$$$
-\phi(v)\varphi(v)\left( \zeta(F(v))(F^*)^2(\psi) - \psi(F(v))(F^*)^2(\zeta) \right)
$$$$
-\psi(v)\zeta(v)\left( \varphi(F(v))(F^*)^2(\phi) - \phi(F(v))(F^*)^2(\varphi) \right)
$$$$
+\psi(v)\varphi(v)\left( \zeta(F(v))(F^*)^2(\phi) - \phi(F(v))(F^*)^2(\zeta) \right)
=0,
$$
for all $\psi,\phi,\varphi,\zeta \in V^*=\mathfrak{g}$.
It means that
$
\mathfrak{g}^{(2)}=[[\mathfrak{g},\mathfrak{g}]_{F,v},[\mathfrak{g},\mathfrak{g}]_{F,v}]_{F,v}=0.
$
\end{proof}

Additionally, if we assume that the mapping $F$ giving the Lie bracket is nilpotent, then we obtain a nilpotent Lie algebra.
\begin{theorem}
If $F$ is a nilpotent operator, then  $(V^*,[\cdot,\cdot]_{F,v})$ is a nilpotent Lie algebra.
\end{theorem}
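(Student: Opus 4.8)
The plan is to control the lower central series $\mathfrak{g}_{(k)}$ of the Lie algebra $\mathfrak{g}=V^*$ by the powers of the image of $F^*$, and then to let the nilpotency of $F$ kill the series. Since $F$ is nilpotent, its only eigenvalue is $0$, so the eigenvector $v$ actually satisfies $Fv=0$; this is the place where nilpotency enters and it is what makes the whole argument work. The first, easy observation is that every bracket lands in $\mathrm{Im}\,F^*$: from (\ref{b1}) we have $[\psi,\phi]_{F,v}=\phi(v)F^*(\psi)-\psi(v)F^*(\phi)$, so $\mathfrak{g}_{(1)}=[\mathfrak{g},\mathfrak{g}]_{F,v}\subseteq \mathrm{Im}\,F^*$.

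The key step is to evaluate a bracket on $v$ itself: using $Fv=0$ one gets $\iota_v\big([\psi,\phi]_{F,v}\big)=\phi(v)\,\psi(Fv)-\psi(v)\,\phi(Fv)=0$. Hence every element of $\mathfrak{g}_{(1)}$ lies in the annihilator $\{\chi\in V^*:\chi(v)=0\}$ of $v$. Consequently, for $\chi\in\mathfrak{g}_{(1)}$ and arbitrary $\phi\in\mathfrak{g}$ the second term of (\ref{b1}) drops out and $[\chi,\phi]_{F,v}=\phi(v)\,F^*(\chi)\in F^*(\mathrm{Im}\,F^*)$.

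From here I would prove by induction on $k\ge 1$ that $\mathfrak{g}_{(k)}\subseteq \mathrm{Im}\,(F^*)^k$. The case $k=1$ is the first observation. For the inductive step, any $\chi\in\mathfrak{g}_{(k)}\subseteq\mathfrak{g}_{(1)}$ annihilates $v$, so $[\chi,\phi]_{F,v}=\phi(v)\,F^*(\chi)\in F^*\big(\mathrm{Im}\,(F^*)^k\big)=\mathrm{Im}\,(F^*)^{k+1}$, and taking linear spans gives $\mathfrak{g}_{(k+1)}\subseteq \mathrm{Im}\,(F^*)^{k+1}$. Finally, since $F$ is nilpotent, $F^m=0$ for some $m$, hence $(F^*)^m=(F^m)^*=0$, so $\mathfrak{g}_{(m)}\subseteq \mathrm{Im}\,(F^*)^m=0$; the lower central series terminates and $(V^*,[\cdot,\cdot]_{F,v})$ is nilpotent.

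I do not expect a genuine obstacle here. The only subtlety worth flagging is that the crude containment $\mathfrak{g}_{(k)}\subseteq \mathrm{Im}\,F^*$, valid for every $k\ge 1$, never terminates by itself; to gain one extra power of $F^*$ at each stage of the lower central series one really must exploit the vanishing $\iota_v|_{\mathfrak{g}_{(1)}}=0$, which in turn relies precisely on the equality $Fv=0$ forced by nilpotency of $F$.
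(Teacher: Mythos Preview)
Your proof is correct and follows essentially the same route as the paper: both arguments show, by an explicit inductive computation, that $k$-fold iterated brackets land in $\mathrm{Im}\,(F^*)^k$, and then invoke $(F^*)^m=0$. One small correction to your commentary: the vanishing $\iota_v\big([\psi,\phi]_{F,v}\big)=\phi(v)\,\psi(Fv)-\psi(v)\,\phi(Fv)$ already follows from the eigenvector condition $Fv=\lambda v$ alone (both terms equal $\lambda\,\phi(v)\psi(v)$ and cancel), so the step you flag as ``relying precisely on $Fv=0$'' does not actually need nilpotency there---the paper's version makes this explicit by writing ``because $Fv=\lambda v$''.
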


\begin{proof}
Denote $\mathfrak{g}=V^*$.
As a first step, we calculate
\begin{equation}
[[\psi,\phi]_{F,v},\varphi]_{F,v}=[\phi(v)F^*(\psi)-\psi(v)F^*(\phi),\varphi]_{F,v}
\end{equation}
$$
=\varphi(v)\left( \phi(v)(F^*)^2(\psi)-\psi(v)(F^*)^2(\phi)\right),
$$
because $Fv=\lambda v$. Thus
\begin{equation}
[[[\psi,\phi]_{F,v},\varphi]_{F,v},\zeta]_{F,v}=\zeta(v)\varphi(v)\left( \phi(v)(F^*)^3(\psi)-\psi(v)(F^*)^3(\phi)\right),
\end{equation}
$$
\vdots 
$$
Finally, since $\left(F^*\right)^n=0$ for some $n$, then we have $\mathfrak{g}_{(n)}=0$.
\end{proof}

Now, we show when two different linear mappings give to isomorphic Lie algebras. 

\begin{theorem}
Let $V,W$ be vector spaces over $\mathbb R$, and $F\in End (V)$, $G\in End (W)$, $v\in V$, $w\in W$ such that \begin{equation}
\exists{\lambda\in\mathbb R}\quad Fv=\lambda v,\quad \exists{\gamma\in\mathbb R}\quad Gw=\gamma w.
\end{equation}
Then Lie algebras $(V^*,[\cdot ,\cdot]_{F,v})$, $(W^*,[\cdot , \cdot]_{G,w})$ are isomorphic if and only if there exists a linear bijection $\Phi:V \longrightarrow W$ such that 
\begin{equation}
\iota_{(\Phi v)}\wedge (\Phi\circ F)^*=\iota_w\wedge (G\circ\Phi )^*.
\end{equation}
 \end{theorem}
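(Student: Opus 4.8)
The plan is to set up the standard correspondence between Lie-algebra isomorphisms $(V^*,[\cdot,\cdot]_{F,v}) \to (W^*,[\cdot,\cdot]_{G,w})$ and their duals, and then translate the bracket-preservation condition into the displayed tensor identity. First I would recall the form of the bracket written in wedge notation, namely $[\psi,\phi]_{F,v} = (\iota_v \wedge F^*)(\phi,\psi)$, which was introduced just after Theorem \ref{theorem 1}. Any linear bijection $\Psi : W^* \to V^*$ is the dual $\Psi = \Phi^*$ of a unique linear bijection $\Phi : V \to W$ (using $V^{**}\cong V$, $W^{**}\cong W$), so it suffices to characterize when $\Phi^*$ is a Lie-algebra isomorphism from $(V^*,[\cdot,\cdot]_{F,v})$ to $(W^*,[\cdot,\cdot]_{G,w})$, i.e.\ when
\begin{equation}
\Phi^*\left([\psi,\phi]_{G,w}\right) = [\Phi^*\psi,\Phi^*\phi]_{F,v} \quad \text{for all } \psi,\phi \in W^*.
\end{equation}
(One should note that every isomorphism between the two target Lie algebras arises this way, since both underlying vector spaces have the same dimension once the algebras are isomorphic; the degenerate cases where $F$ or $G$ is zero are handled separately and trivially.)

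Next I would expand both sides explicitly. On the left, $[\psi,\phi]_{G,w} = \phi(w)G^*\psi - \psi(w)G^*\phi$, so $\Phi^*[\psi,\phi]_{G,w} = \phi(w)\,\Phi^*G^*\psi - \psi(w)\,\Phi^*G^*\phi = \phi(w)(G\Phi)^*\psi - \psi(w)(G\Phi)^*\phi$. On the right, $[\Phi^*\psi,\Phi^*\phi]_{F,v} = (\Phi^*\phi)(v)\,F^*\Phi^*\psi - (\Phi^*\psi)(v)\,F^*\Phi^*\phi = \phi(\Phi v)(\Phi F)^*\psi - \psi(\Phi v)(\Phi F)^*\phi$, using $(\Phi^*\phi)(v) = \phi(\Phi v) = \iota_{\Phi v}\phi$ and $F^*\Phi^* = (\Phi F)^*$. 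Thus the isomorphism condition becomes, for all $\psi,\phi$,
\begin{equation}
\phi(w)(G\Phi)^*\psi - \psi(w)(G\Phi)^*\phi = \phi(\Phi v)(\Phi F)^*\psi - \psi(\Phi v)(\Phi F)^*\phi,
\end{equation}
which is exactly the statement that the two alternating bilinear maps $\iota_w \wedge (G\Phi)^*$ and $\iota_{\Phi v} \wedge (\Phi F)^*$ (valued in $V^*$) coincide when evaluated on $(\phi,\psi)$. Hence the condition is equivalent to $\iota_{\Phi v}\wedge(\Phi F)^* = \iota_w\wedge(G\Phi)^*$, which is the claimed identity.

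For the converse direction I would simply run the same computation backwards: given a linear bijection $\Phi$ satisfying the wedge identity, reading the chain of equalities in reverse shows that $\Phi^*$ intertwines the two brackets, hence is a Lie-algebra isomorphism. The main thing to be careful about — and what I expect to be the only real subtlety — is the claim that \emph{every} Lie-algebra isomorphism is of the form $\Phi^*$ for some bijection $\Phi:V\to W$; this rests on the functoriality of the dual together with $\dim V = \dim W$ (which itself follows from the existence of an isomorphism, so there is no circularity), plus the reflexivity isomorphisms in finite dimensions. Everything else is the bookkeeping identity $F^*\Phi^* = (\Phi\circ F)^*$ and the translation between the expanded formula and the wedge notation; these are routine once the wedge form of the bracket from the paragraph after Theorem \ref{theorem 1} is invoked.
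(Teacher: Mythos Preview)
Your proof is correct and follows essentially the same approach as the paper: both directions boil down to the computation $[\Phi^*\psi,\Phi^*\phi]_{F,v}=(\iota_{\Phi v}\wedge(\Phi F)^*)(\phi,\psi)$ and $\Phi^*[\psi,\phi]_{G,w}=(\iota_w\wedge(G\Phi)^*)(\phi,\psi)$, together with the observation that any isomorphism $\Psi:W^*\to V^*$ is $\Phi^*$ for $\Phi:=\Psi^*$. One small wording slip: $\Phi^*$ maps $W^*\to V^*$, so it is an isomorphism \emph{from} $(W^*,[\cdot,\cdot]_{G,w})$ \emph{to} $(V^*,[\cdot,\cdot]_{F,v})$ rather than the direction you wrote, but your displayed equation is stated correctly and this does not affect the argument.
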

 
  \begin{proof}
  Let $\psi,\phi\in W^*$.
  If we suppose that there exists a linear bijection $\Phi:V\longrightarrow W$ such that $\iota_{(\Phi v)}\wedge (\Phi\circ F)^*=\iota_w\wedge (G\circ\Phi )^* $ we obtain
\begin{equation}
[\Phi^*\psi,\Phi^*\phi]_{F,v}=(\iota_v\wedge F^*)(\Phi^*\phi,\Phi^*\psi) =(\iota_{(\Phi v)}\wedge (\Phi\circ F)^*)(\phi,\psi)
\end{equation}
$$=(\iota_w\wedge (G\circ\Phi )^*)(\phi,\psi)=\Phi^*([\psi,\phi]_{G,w}).$$
The above gives that $\Phi^*$ is the isomorphism of Lie algebras $(V^*,[\cdot , \cdot]_{F,v})$ and $(W^*,[\cdot ,\cdot]_{G,w})$.

  On the other hand if we suppose that $(V^*,[\cdot ,\cdot]_{F,v})\cong(W^*,[\cdot , \cdot]_{G,w})$,  which means  there is a linear bijection $\Psi:W^*\rightarrow V^*$ such that 
\begin{equation}
\label{4} [\Psi\psi,\Psi\phi]_{F,v}=\Psi([\psi,\phi]_{G,w}).
\end{equation}
We denote $\Phi:=\Psi^*$. The relation (\ref{4}) implies that 
\begin{equation}
(\iota_v\wedge F^*)(\Phi^*\phi,\Phi^*\psi)=[\Phi^*\psi,\Phi^*\phi]_{F,v}=\Phi^*([\psi,\phi]_{G,w})
\end{equation}
$$=\Phi^*((\iota_w\wedge G^*)(\phi,\psi))=
  (\iota_w\wedge (G\circ \Phi)^*)(\phi,\psi),$$
for all  $\psi,\phi\in W^*$.  
 Hence the theorem is proved.
\end{proof}
Note that the assumptions of the theorem are satisfied if $F=\Phi^{-1}\circ G \circ\Phi$ and $\Phi v=w$.
Furthermore, if the mappings $F$ and $G$ have common eigenvectors, then it is $Fv=\lambda_1 v$, $Gv=\lambda_2 v$, and they act the same way on the other vectors $F\big|_{V\setminus<v>}=G\big|_{V\setminus<v>}$, the Lie algebras obtained from $F$ and $G$ are also isomorphic.
This indicates that the construction does not depend on the eigenvalue of the operator.
So we will often put it zero in the future.

We now answer the question when the linear combination of Lie brackets of form  (\ref{b1}) gives a Lie bracket.
\begin{theorem}
Let $V$ be a vector space over $\mathbb R$. If $F,G\in End (V)$, $v,w\in V $ are such that:
\begin{enumerate}
\item $v$ is an eigenvector of the map $F$,
\item $w$ is an eigenvector of the map $G$,
\item the following condition is true
\begin{equation}
\label{condition}
\iota_w\wedge [F,G]^*\wedge \iota_v
+\iota_w \wedge \iota_{Gv} \wedge F^*+ \iota_v \wedge \iota_{Fw} \wedge G^*=0.
\end{equation}
\end{enumerate}
Then 
$(V^*,[\cdot,\cdot]^\epsilon_{F,v,G,w})$, where
 \begin{equation}
  [\psi,\phi]^{\epsilon}_{F,v,G,w}= [\psi,\phi]_{F,v}+\epsilon [\psi,\phi]_{G,w}
\end{equation}
is a Lie algebra for every $\epsilon\in \mathbb R$.
\end{theorem}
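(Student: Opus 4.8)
The plan is to verify directly that the bilinear, antisymmetric map $[\cdot,\cdot]^{\epsilon}_{F,v,G,w}$ satisfies the Jacobi identity, exploiting the fact that the two summands $[\cdot,\cdot]_{F,v}$ and $[\cdot,\cdot]_{G,w}$ are each already Lie brackets by Theorem~\ref{theorem 1}. Writing $J_{H,u}(\psi,\phi,\zeta)$ for the cyclic sum of $[[\psi,\phi]_{H,u},\zeta]_{H,u}$, bilinearity of the bracket in each slot gives the expansion
\[
J^{\epsilon}(\psi,\phi,\zeta)=J_{F,v}(\psi,\phi,\zeta)+\epsilon^2\,J_{G,w}(\psi,\phi,\zeta)+\epsilon\,C(\psi,\phi,\zeta),
\]
where $C$ collects the mixed cross terms, namely the cyclic sum of $[[\psi,\phi]_{F,v},\zeta]_{G,w}+[[\psi,\phi]_{G,w},\zeta]_{F,v}$. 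By Theorem~\ref{theorem 1} applied to $(F,v)$ and to $(G,w)$, the first two terms vanish identically, so the Jacobi identity for $[\cdot,\cdot]^{\epsilon}_{F,v,G,w}$ is equivalent to $C\equiv 0$, and crucially this must hold for \emph{every} $\epsilon$, which is automatic once $C$ itself is identically zero (there is no $\epsilon$-dependence left in $C$).

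The next step is to compute $C(\psi,\phi,\zeta)$ explicitly and show it equals the left-hand side of condition~(\ref{condition}) evaluated on $(\psi,\phi,\zeta)$. Using $Fv=\lambda v$ and $Gw=\gamma w$ exactly as in the proofs of the earlier theorems — that is, substituting $F^*(\chi)(v)=\lambda\chi(v)$ and $G^*(\chi)(w)=\gamma\chi(w)$ wherever such contractions appear — one expands each of the six mixed terms. Several pieces proportional to $\lambda$ or $\gamma$ will cancel cyclically (just as the pure $F^*\!\circ F^*$ terms cancelled in Theorem~\ref{theorem 1}), and what survives reorganizes into three groups: one built from $F^*\!\circ G^*$ and $G^*\!\circ F^*$, which assemble into the commutator term $\iota_w\wedge[F,G]^*\wedge\iota_v$, and two groups carrying the ``wrong'' contractions $\iota_{Gv}$ and $\iota_{Fw}$ (coming from terms like $\phi(Gv)$ where $v$ is an eigenvector of $F$ but not necessarily of $G$), which assemble into $\iota_w\wedge\iota_{Gv}\wedge F^*$ and $\iota_v\wedge\iota_{Fw}\wedge G^*$ respectively. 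Once the bookkeeping identifies
\[
C(\psi,\phi,\zeta)=\Bigl(\iota_w\wedge[F,G]^*\wedge\iota_v+\iota_w\wedge\iota_{Gv}\wedge F^*+\iota_v\wedge\iota_{Fw}\wedge G^*\Bigr)(\psi,\phi,\zeta),
\]
hypothesis~3 gives $C\equiv 0$, completing the proof.

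The main obstacle is purely the combinatorial one of carrying out that six-term expansion cleanly and recognizing the wedge structure at the end; there is no conceptual difficulty, only the risk of sign or index errors. A useful organizing device is the rewriting $[\psi,\phi]_{F,v}=(\iota_v\wedge F^*)(\phi,\psi)$ noted after Theorem~\ref{theorem 1}: expressing everything in terms of wedges of $\iota_v,\iota_w,\iota_{Fv},\iota_{Gv},\iota_{Fw},F^*,G^*$ from the outset, and using that $\iota_v\wedge F^*$ has $\iota_{Fv}=\lambda\iota_v$ as a redundant factor, lets the eigenvector hypotheses be applied symbolically rather than componentwise, which both shortens the computation and makes the cancellations transparent. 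It is worth remarking explicitly that conditions~1 and~2 are exactly what is needed for the pure terms $J_{F,v}$ and $J_{G,w}$ to vanish, while condition~3 handles the cross terms, so all three hypotheses are genuinely used.
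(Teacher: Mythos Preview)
Your proposal is correct and follows essentially the same approach as the paper: the paper likewise reduces the Jacobi identity for $[\cdot,\cdot]^{\epsilon}_{F,v,G,w}$ to the vanishing of the cross term (they divide the full Jacobiator by $\epsilon$, having already absorbed the pure $J_{F,v}$ and $\epsilon^2 J_{G,w}$ pieces via Theorem~\ref{theorem 1}), expands that cross term explicitly into the nine lines of $[F,G]^*$, $F^*$, and $G^*$ contributions, and then repackages the result as $\bigl(\iota_w\wedge[F,G]^*\wedge\iota_v+\iota_w\wedge\iota_{Gv}\wedge F^*+\iota_v\wedge\iota_{Fw}\wedge G^*\bigr)(\psi,\phi,\zeta)$. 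Your decomposition $J^{\epsilon}=J_{F,v}+\epsilon^2 J_{G,w}+\epsilon\,C$ makes the role of each hypothesis more transparent than the paper's presentation, but the underlying computation and identification are the same.
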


\begin{proof}
 Using Jacobi identity we calculate 
\begin{equation}
 \dfrac{1}{\epsilon}\left([[\psi,\phi]^{\epsilon}_{F,v,G,w},\zeta  ]^{\epsilon}_{F,v,G,w}+
[[\zeta,\psi]^{\epsilon}_{F,v,G,w},\phi  ]^{\epsilon}_{F,v,G,w}+
[[\phi,\zeta]^{\epsilon}_{F,v,G,w},\psi  ]^{\epsilon}_{F,v,G,w}\right)
\end{equation}
$$
= \zeta(w)\left(\phi(v)\left([F,G]\right)^*(\psi)-\psi(v)\left([F,G]\right)^*(\phi)\right)
$$ $$
+\phi(w)\left(\psi(v)\left([F,G]\right)^*(\zeta)-\zeta(v)\left([F,G]\right)^*(\psi)\right)
$$ $$
+\psi(w)\left(\zeta(v)\left([F,G]\right)^*(\phi)-\phi(v)\left([F,G]\right)^*(\zeta)\right)
$$ $$
+\left(\psi(w)\phi(G(v)) - \phi(w)\psi(G(v))\right) F^*(\zeta )
+\left(\zeta(w)\psi(G(v)) - \psi(w)\zeta(G(v))\right) F^*(\phi )
$$ $$
+\left(\phi(w)\zeta(G(v)) - \zeta(w)\phi(G(v))\right) F^*(\psi )
+\left(\psi(v)\phi(F(w)) - \phi(v)\psi(F(w))\right) G^*(\zeta )
$$ $$
+\left(\zeta(v)\psi(F(w)) - \psi(v)\zeta(F(w))\right) G^*(\phi )
+\left(\phi(v)\zeta(F(w)) - \zeta(v)\phi(F(w))\right) G^*(\psi )=0
$$
for $\psi,\phi,\zeta \in V^*$.
It can be rewritten as
\begin{equation}
\bigg( \iota_w\wedge [F,G]^*\wedge \iota_v
+\iota_w \wedge \iota_{Gv} \wedge F^*+ \iota_v \wedge \iota_{Fw} \wedge G^* \bigg)(\psi, \phi, \zeta)=0.
\end{equation}
This finishes the proof.
\end{proof}

Finding the general solution of the condition (\ref{condition}) seems to be difficult, but some classes of mappings satisfying it can be easily specified. 
\begin{corollary}\label{0}
If $V$ is a vector space over $\mathbb R$ and if $F,G\in End (V)$, $ v,w\in V $ such that:
\begin{align}
& [F,G]=0, \\
& \exists {\lambda\in\mathbb R}\quad Fv=\lambda v,\\
& \exists {\gamma\in\mathbb R}\quad Gw=\gamma w,\\
&  Fw=0,\\
& Gv=0,
\end{align}
then $(V^*,[.,.]^{\epsilon}_{F,v,G,w})$  is a Lie algebra for every $\epsilon$.
\end{corollary}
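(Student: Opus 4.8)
The plan is to obtain this corollary as an immediate consequence of the preceding theorem on linear combinations of brackets of the form (\ref{b1}). That theorem asserts that $[\cdot,\cdot]^{\epsilon}_{F,v,G,w}$ is a Lie bracket for every $\epsilon\in\mathbb R$ as soon as three hypotheses hold: $v$ is an eigenvector of $F$, $w$ is an eigenvector of $G$, and the trilinear identity (\ref{condition}) is satisfied. The first two are literally among the assumptions of the corollary, namely $Fv=\lambda v$ and $Gw=\gamma w$, so nothing has to be done for them.

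It therefore suffices to verify (\ref{condition}). The key elementary observation I would record first is that $\iota$ is linear in its vector argument, so $\iota_{0}$ is the zero functional on $V^{*}$; hence any exterior product one of whose factors is $\iota_{0}$ (or, more generally, is $0$) vanishes identically. Now inspect the three summands of (\ref{condition}) separately. The term $\iota_w\wedge[F,G]^{*}\wedge\iota_v$ vanishes because $[F,G]=0$ implies $[F,G]^{*}=0$. The term $\iota_w\wedge\iota_{Gv}\wedge F^{*}$ vanishes because $Gv=0$ gives $\iota_{Gv}=\iota_0=0$. The term $\iota_v\wedge\iota_{Fw}\wedge G^{*}$ vanishes because $Fw=0$ gives $\iota_{Fw}=\iota_0=0$. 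Thus the left-hand side of (\ref{condition}) is a sum of three zeros, and the condition holds.

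Having checked all three hypotheses of the theorem, I would simply invoke it to conclude that $(V^{*},[\cdot,\cdot]^{\epsilon}_{F,v,G,w})$ is a Lie algebra for every $\epsilon$. There is no genuine obstacle here: the only point deserving a word of care is the bookkeeping remark that a wedge containing a zero entry is zero, which is precisely what lets $[F,G]=0$, $Fw=0$ and $Gv=0$ collapse (\ref{condition}) term by term. Alternatively, one could bypass the theorem and substitute $[F,G]=0$, $Fw=0$, $Gv=0$ directly into the nine-term expansion appearing in that theorem's proof; each line then vanishes on inspection, but this is strictly more laborious than citing the already-established result.
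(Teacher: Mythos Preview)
Your proposal is correct and matches the paper's intent: the corollary is stated immediately after the theorem with no separate proof, the remark ``some classes of mappings satisfying it can be easily specified'' being the only bridge. Your term-by-term verification that $[F,G]=0$, $Gv=0$, and $Fw=0$ annihilate the three summands of (\ref{condition}) is exactly the intended reading.
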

 
Under these assumptions, we can tell when a sum of Lie brackets gives a nilpotent Lie algebra. 
 
\begin{theorem}
Let $V$ be a vector space over $\mathbb R$. If $F,G\in End   (V)$ are nilpotent and all the assumptions of   previous Corollary \ref{0} are fulfilled, then $(V^*,[\cdot ,\cdot]^{\epsilon}_{F,v,G,w})$  is a nilpotent Lie algebra.
\end{theorem}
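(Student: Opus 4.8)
The plan is to compute the lower central series of $(V^*,[\cdot,\cdot]^{\epsilon}_{F,v,G,w})$ explicitly. First I would observe that a nilpotent operator has only the eigenvalue $0$, so $Fv=\lambda v$ forces $\lambda=0$ and $Gw=\gamma w$ forces $\gamma=0$; hence under the hypotheses of Corollary \ref{0} we in fact have the five relations $Fv=Fw=Gv=Gw=0$ and $[F,G]=0$. Passing to adjoints, $FG=GF$ yields $F^*G^*=G^*F^*$, so that $F^*$ and $G^*$ are commuting nilpotent endomorphisms of $V^*$; let $p$ and $q$ denote their respective nilpotency indices.

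The core of the argument is the following claim, to be proved by induction on $k\ge 1$ (here $\mathfrak{g}=V^*$): every element of $\mathfrak{g}_{(k)}$ lies in the linear span of the vectors $(F^*)^a(G^*)^b\xi$ with $a+b=k$, where $\xi$ runs over a fixed basis of $V^*$. The case $k=1$ is immediate from $[\psi,\phi]^{\epsilon}_{F,v,G,w}=\phi(v)F^*(\psi)-\psi(v)F^*(\phi)+\epsilon\big(\phi(w)G^*(\psi)-\psi(w)G^*(\phi)\big)$. For the inductive step, take $X=(F^*)^a(G^*)^b\xi$ with $a+b=k\ge 1$ and an arbitrary $\phi\in V^*$; then $[X,\phi]^{\epsilon}_{F,v,G,w}=\phi(v)F^*(X)-X(v)F^*(\phi)+\epsilon\big(\phi(w)G^*(X)-X(w)G^*(\phi)\big)$. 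The decisive point is that the two ``dangerous'' coefficients vanish: $X(v)=\xi(F^aG^bv)=0$, since $G^bv=0$ when $b\ge 1$ and $F^av=0$ when $b=0,\ a\ge 1$; symmetrically $X(w)=\xi(F^aG^bw)=0$ using $Fw=Gw=0$. Hence $[X,\phi]^{\epsilon}_{F,v,G,w}=\phi(v)(F^*)^{a+1}(G^*)^b\xi+\epsilon\,\phi(w)(F^*)^a(G^*)^{b+1}\xi$, where commutativity of $F^*$ and $G^*$ is used to put $G^*(X)$ into ordered monomial form; both terms have total degree $k+1$, which establishes the claim.

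Finally I would conclude with a pigeonhole remark: whenever $a+b\ge p+q-1$ one has $a\ge p$ or $b\ge q$, so $(F^*)^a(G^*)^b=0$; by the claim this gives $\mathfrak{g}_{(p+q-1)}=0$, i.e. the lower central series terminates, and therefore $(V^*,[\cdot,\cdot]^{\epsilon}_{F,v,G,w})$ is nilpotent for every $\epsilon$. I do not anticipate a serious obstacle: the only delicate bookkeeping is verifying that the evaluations $X(v)$ and $X(w)$ remain zero at every stage of the induction, and this is exactly what the cross-relations $Fw=Gv=0$ (together with $Fv=Gw=0$, which nilpotency forces for free) are designed to guarantee, while the hypothesis $[F,G]=0$ is precisely what allows the mixed products of $F^*$ and $G^*$ to be collapsed into ordered monomials and hence produces the clean vanishing bound $p+q-1$.
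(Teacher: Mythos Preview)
Your argument is correct and follows essentially the same strategy as the paper's proof: both show that a $k$-fold iterated bracket in the lower central series lands in the span of vectors of the form $(F^*)^a(G^*)^b\xi$ with $a+b=k$, using the cross-annihilation relations $Fv=Fw=Gv=Gw=0$ to kill the ``dangerous'' evaluation terms at each step, and then invoke nilpotency of $F,G$ (and $[F,G]=0$) to conclude. The paper packages this as a closed-form identity for $[[\cdots[[\psi_1,\psi_2]_{F_1,v_1},\psi_3]_{F_2,v_2},\dots],\psi_{k+1}]_{F_k,v_k}$ with each $(F_i,v_i)\in\{(F,v),(G,w)\}$, whereas you run a clean induction on the span of $\mathfrak g_{(k)}$ and extract an explicit nilpotency-class bound $p+q-1$; this is a presentational difference only.
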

 
\begin{proof}
  Let $\psi_1,...,\psi_{k+1}\in  V^*$ and $(F_1,v_1),...,(F_k,v_k)$  be pairs such that $F_i\in End(V)$, $v_i\in V$ and $F_iv_i=\lambda_i v_i$, where $\lambda_i\in\mathbb{R}$ for all $i=1,\dots, k$. We put $\epsilon=1$ which does not reduce generality. We assume that the pairs  $(F_i,v_i)$, $(F_j,v_j)$ for all $i,j\in \{1,,,,k\}$ satisfy  all the assumptions of   previous Corollary \ref{0}. Then  
\begin{equation}
[[[[\psi_1,\psi_2]^{1}_{F_1,v_1},\psi_3]^{1}_{F_2,v_2},...]^{1}_{F_{k-1},v_{k-1}},\psi_{k+1}]^{1}_{F_k,v_k}
\end{equation}
$$=((F_1\circ ...\circ F_k)^{*}\otimes \iota_{v_1}\otimes ...\otimes \iota_{v_k}-\iota_{v_1}\otimes(F_1\circ ... \circ F_k)^{*}\otimes \iota_{v_2}\otimes...\otimes \iota_{v_k})$$
$$( \psi_1, ...,     \psi_{k+1}).$$
In our case $(F_i,v_i)$ are either $(F,v)$ or $(G,w)$. From above and the assumption that $F,G $ are nilpotent the theorem is proved.
\end{proof}

\section{Examples}
\label{s4}

In this section we study the case of dimensions three and four.
We will now show that using the above  theorems we can easily obtain low dimensional Lie algebras. We use the classification from the paper \cite{n1}, see also \cite{SW}.

We will start with the three-dimensional Lie algebras.
\begin{example}
\label{example-1}
Let us take $V=\mathbb{R}^3$ with the standard basis $\{e_1, e_2, e_3\}$. 
We will show how to easily connect three-dimensional real Lie algebras with the corresponding linear mappings and their eigenvectors. 
We will restrict ourselves to the eigenvector $v=( 0,0,1)^{\top}$. Lie brackets will be defined in the space $V^*=\left(\mathbb{R}^3\right)^{\top}$ with the dual base $\{e^*_1, e^*_2, e^*_3\}$.
\begin{enumerate}
\item If we take 
\begin{equation}
F=\left(
\begin{matrix}
\lambda_1 & 0 & 0\\
0 & \lambda_2 & 0\\
0 & 0 & 0
\end{matrix}
\right),
\end{equation}
where $\lambda_1,\lambda_2\in\mathbb{R}$,
we obtain the Lie bracket of the form
\begin{equation}
 \label{b11} [\psi,\phi]_{F,v}= \lambda_1\left( \psi_1\phi_3-\psi_3\phi_1\right)e_1^{*}
+\lambda_2\left( \psi_2\phi_3-\psi_3\phi_2\right)e_2^{*},
\end{equation}
where $\psi=\psi_1 e_1^*+\psi_2 e_2^*+\psi_3 e_3^*$ and  $\phi=\phi_1 e_1^*+\phi_2 e_2^*+\phi_3 e_3^*$.
The commutator rules are following
\begin{equation}
[e_1^*,e_2^*]=0,\quad [e_1^*,e_3^*]=\lambda_1 e_1^*,\quad [e_2^*,e_3^*]=\lambda_2 e_2^*.
\end{equation}
\begin{enumerate}
\item For $\lambda_1=\lambda_2=1$, we recognize the Lie structure related to the Lie algebra $\mathfrak{g}_{3,3}$. The commutator rules for $\mathfrak{g}_{3,3}$ are $[e_1^*,e_3^*]=e_1^*$, $[e_2^*,e_3^*]=e_2^*$.
\item For $\lambda_1=-\lambda_2=1$, we recognize the Lie structure related to the Lie algebra $\mathfrak{g}_{3,4}$. The commutator rules for $\mathfrak{g}_{3,4}$ are $[e_1^*,e_3^*]=e_1^*$, $[e_2^*,e_3^*]=-e_2^*$.
\item For $\lambda_1=1$, $\lambda_2=a$, we recognize the Lie structure related to the Lie algebra $\mathfrak{g}_{3,5}^a$. The commutator rules for $\mathfrak{g}_{3,5}^a$ are $[e_1^*,e_3^*]=e_1^*$, $[e_2^*,e_3^*]=a e_2^*$.
\item For $\lambda_1= 1$, $\lambda_2=0$, we recognize the Lie structure related to the Lie algebra $\mathfrak{g}_{2,1}\oplus \langle e_2^* \rangle$. 
\end{enumerate}
\item If we take 
\begin{equation}
F=\left(
\begin{matrix}
\lambda_1 & 0 & 0\\
1 & \lambda_1 & 0\\
0 & 0 & 0
\end{matrix}
\right),
\end{equation}
where $\lambda_1\in\mathbb{R}$,
we obtain the Lie bracket of the form
\begin{align}
 \label{b12} [\psi,\phi]_{F,v} & = \left(\lambda_1\left( \psi_1\phi_3-\psi_3\phi_1\right)+\psi_2\phi_3-\psi_3\phi_2\right)e_1^{*}\\
&+\lambda_1\left( \psi_2\phi_3-\psi_3\phi_2\right)e_2^{*}.\nonumber
\end{align}
The commutator rules are following
\begin{equation}
[e_1^*,e_2^*]=0,\quad [e_1^*,e_3^*]=\lambda_1 e_1^*,\quad [e_2^*,e_3^*]=e_1^*+\lambda_1 e_2^*.
\end{equation}
\begin{enumerate}
\item For $\lambda_1=0$, we recognize the Lie structure related to the Lie algebra $\mathfrak{g}_{3,1}$. The commutator rule for $\mathfrak{g}_{3,1}$ is  $[e_2^*,e_3^*]=e_1^*$.
\item For $\lambda_1=1$, we recognize the Lie structure related to the Lie algebra $\mathfrak{g}_{3,2}$. The commutator rules for $\mathfrak{g}_{3,2}$ are $[e_1^*,e_3^*]=e_1^*$, $[e_2^*,e_3^*]=e_1^*+e_2^*$.
\end{enumerate}
\item If we take 
\begin{equation}
F=\left(
\begin{matrix}
a & -1 & 0\\
1 & a & 0\\
0 & 0 & 0
\end{matrix}
\right),
\end{equation}
where $a\in\mathbb{R}$,
we obtain the Lie bracket of the form
\begin{align}
 \label{b13} [\psi,\phi]_{F,v} & = \left( a\left( \psi_1\phi_3-\psi_3\phi_1\right)+\psi_2\phi_3-\psi_3\phi_2\right)e_1^{*}\\
&+\left( -\psi_1\phi_3+\psi_3\phi_1 + a\left( \psi_2\phi_3-\psi_3\phi_2\right)\right)e_2^{*}.\nonumber
\end{align}
The commutator rules are following
\begin{equation}
[e_1^*,e_2^*]=0,\quad [e_1^*,e_3^*]=a e_1^*- e_2^*,\quad [e_2^*,e_3^*]=e_1^*+a e_2^*.
\end{equation}
\begin{enumerate}
\item For $a\neq 0$, we recognize the Lie structure related to the Lie algebra $\mathfrak{g}_{3,7}^a$. The commutator rules for $\mathfrak{g}_{3,7}^a$ are $[e_1^*,e_3^*]=a e_1^*- e_2^*$, $[e_2^*,e_3^*]=e_1^*+a e_2^*$.
\item For $a=0$, we recognize the Lie structure related to the Lie algebra $\mathfrak{g}_{3,6}$. The commutator rule for $\mathfrak{g}_{3,6}$ are $[e_1^*,e_3^*]=-e_2^*$, $[e_2^*,e_3^*]=e_1^*$.
\end{enumerate}
\item If we take 
\begin{equation}
F=\left(
\begin{matrix}
0         & -1        & 0   \\
1         & 0         & 0    \\
0         & 0         & 0      
\end{matrix}
\right),
\end{equation}
and as the second linear mapping and its eigenvector we take
\begin{equation}
G=\left(
\begin{matrix}
0         & 0         & 1     \\
0         & 0         & 0     \\
0         & 0         & 0         
\end{matrix}
\right),\quad 
w=\left(\begin{matrix}
0 \\ 1\\  0
\end{matrix}\right),
\end{equation}
then we obtain the Lie bracket of the form
\begin{align}
 \label{b14} [\psi,\phi]_{F,v, G,w} &= [\psi,\phi]_{F,v}+[\psi,\phi]_{ G,w}=\left( \psi_2\phi_3-\psi_3\phi_2\right)e_1^{*}\\
& -\left( \psi_1\phi_3-\psi_3\phi_1\right)e_2^{*}+\left( \psi_1\phi_2-\psi_2\phi_1\right)e_3^{*}.\nonumber
\end{align}
The nonzero commutator rules are following
\begin{equation}
[e_1^*,e_2^*]= e_3^*,\quad[e_1^*,e_3^*]= -e_2^*,\quad [e_2^*,e_3^*]=e_1^*.
\end{equation}
Above we recognize the Lie structure related to the Lie algebra $\mathfrak{g}_{3,9} = \mathfrak{so}(3)$. 
\item If we take 
\begin{equation}
F=\left(
\begin{matrix}
0         & -2        & 0   \\
0         & 0         & 0    \\
0         & 0         & 0      
\end{matrix}
\right),
\end{equation}
and as the second linear mapping and its eigenvector we take
\begin{equation}
G=\left(
\begin{matrix}
1         & 0         & 0     \\
0         & 0         & 0     \\
0         & 0         & -1         
\end{matrix}
\right),\quad 
w=\left(\begin{matrix}
0 \\ 1\\  0
\end{matrix}\right),
\end{equation}
then we obtain the Lie bracket of the form
\begin{align}
 \label{b15} [\psi,\phi]_{F,v, G,w} &= [\psi,\phi]_{F,v}+[\psi,\phi]_{ G,w}=\left( \psi_1\phi_2-\psi_2\phi_1\right)e_1^{*}\\
& -2\left( \psi_1\phi_3-\psi_3\phi_1\right)e_2^{*}+\left( \psi_2\phi_3-\psi_3\phi_2\right)e_3^{*}.\nonumber
\end{align}
The nonzero commutator rules are following
\begin{equation}
[e_1^*,e_2^*]= e_1^*,\quad[e_1^*,e_3^*]= -2e_2^*,\quad [e_2^*,e_3^*]=e_3^*.
\end{equation}
Above we recognize the Lie structure related to the Lie algebra $\mathfrak{g}_{3,8} = \mathfrak{sl}(2,\mathbb{R})$. 
\end{enumerate}
\end{example}

We have obtained all three dimensional Lie algebras, see Table \ref{table 1}.
We got seven of them using one linear mapping and its eigenvector. 
To describe the other two algebras we needed two linear mappings and their eigenvectors.

In this case, after identifying $V\cong V^*\cong \mathbb{R}^3$,  the bracket (\ref{b1}) can be written as
\begin{equation}
\label{b1-1} [\psi,\phi]_{F,v}=F^{\top}\left( v\times \left(\psi\times \phi\right)\right),
\end{equation}
where we used  the vector triple product expansion
\begin{equation}
v\times \left( w \times u\right)=\left<v | u\right> w - \left< v | w\right> u,
\end{equation}
$\left<\cdot | \cdot \right>$ is the scalar product.
Additionally, it can be rewritten as
\begin{equation}
\label{b1-2} [\psi,\phi]_{F,v}( \cdot )=\left<F(\cdot)\times v |\psi\times \phi\right>,
\end{equation}
where we used the identity 
\begin{equation}
\left< v\times w | u\times t\right>=\left< v|u \right> \left< w|t \right> - \left< v|t \right> \left< w|u \right>.
\end{equation}

Recall that we have a one-to-one correspondence between the Lie algebra structure and the linear Poisson structure.
Thus on $\mathfrak{g}^*\cong \mathfrak{g}=\mathbb{R}^3$ we have the canonical Lie--Poisson structures given by the formula (\ref{bp}).
In our situation, using (\ref{b1-2}), the Poisson bracket (\ref{bp}) is written in the form 
\begin{equation}
\{f,g\}_{F,v}({\bf x})=\left<F({\bf x})\times v |\nabla f({\bf x})\times \nabla g({\bf x})\right>, \quad f,g\in C^{\infty}\left(\mathbb{R}^3\right).
\end{equation}
If we introduce a function $c$ such that it satisfies the formula $\nabla c ({\bf x}) \sim F({\bf x})\times v$ (proportionality is given with precision to the function, i.e., 
\begin{equation}
\label{Casimir}
\nabla c ({\bf x}) =l({\bf x})\left(F({\bf x})\times v\right), 
\end{equation}
where $l\in C^{\infty}\left(\mathbb{R}^3\right)$), then we recognize in this the structure of the Nambu bracket
\begin{equation}
\{f,g,c\}({\bf x})=\dfrac{\partial(f,g,c)}{\partial (x_1,x_2,x_3)}.
\end{equation}
If we treat a function $c$ as a fixed parameter we obtain from Nambu bracket a Poisson bracket for which $c$ is a Casimir function.
We illustrate it by some examples.

\begin{example}
Let us take the Lie algebra $\mathfrak{g}_{3,3}$ and let us consider the equation (\ref{Casimir}) for Casimir function in this case
\begin{equation}
\left\{\begin{array}{l}
 \dfrac{\partial c}{\partial x_1}(x_1,x_2,x_3)=x_2 l(x_1,x_2,x_3)\\
 \dfrac{\partial c}{\partial x_2}(x_1,x_2,x_3)=-x_1 l(x_1,x_2,x_3)\\
 \dfrac{\partial c}{\partial x_3}(x_1,x_2,x_3)=0
\end{array}.\right.
\end{equation}
Determining from the first equation the function $l$ and substituting into the second equation we obtain the linear first order partial differential equation
\begin{equation}
-x_1\dfrac{\partial c}{\partial x_1}(x_1,x_2,x_3)+x_2 \dfrac{\partial c}{\partial x_2}(x_1,x_2,x_3)=0.
\end{equation}
After solving, we have Casimir functions $c$ and the auxiliary function $l$
\begin{align}
& c(x_1,x_2,x_3)=\dfrac{x_2}{x_1},\\
& l(x_1,x_2,x_3)=\dfrac{1}{x_1^2},
\end{align}
see also Table \ref{table 1} (whether the integral factor appears significantly has to do with the  compatibility of the Lie brackets, see \cite{A-W, Pa}).
\end{example}

\begin{table}
\begingroup\makeatletter\def\f@size{6}\check@mathfonts
\begin{tabular}{|c|c|c|c|c|c|}
\hline   $F$                    &      $v$       & $\textrm{l({\bf x})}$               & $\textrm{Casimir}$   
& $\textrm{Name}$   \\
$G$&$w$&  &&\\
\hline $F=\left( \begin{matrix} 0 & 0 & 0\\ 1 & 0 & 0\\ 0 & 0 & 0 \end{matrix} \right)$ & 
$v=\left(\begin{matrix} 0 \\ 0 \\ 1\end{matrix}\right)$ & $ \dfrac{1}{x_1}$     &   $x_1$    & $\mathfrak{g}_{3,1}$\\
\hline $F=\left( \begin{matrix} 1 & 0 & 0\\ 1 & 1 & 0\\ 0 & 0 & 0 \end{matrix} \right)$ & 
$v=\left(\begin{matrix} 0 \\ 0 \\ 1\end{matrix}\right)$ &  $ \dfrac{1}{x_1}e^{-\frac{x_2}{x_1}}$     &   $x_1e^{-\frac{x_2}{x_1}}$     & $\mathfrak{g}_{3,2}$\\
\hline $F=\left( \begin{matrix} 1 & 0 & 0\\ 0 & 1 & 0\\ 0 & 0 & 0 \end{matrix} \right)$ & 
$v=\left(\begin{matrix} 0 \\ 0 \\ 1\end{matrix}\right)$ &  $-\dfrac{1}{x_1^2}$      & $ \dfrac{x_2}{x_1}$        & $\mathfrak{g}_{3,3}$\\
\hline $F=\left( \begin{matrix} 1 & 0 & 0\\ 0 & -1 & 0\\ 0 & 0 & 0 \end{matrix} \right)$ & 
$v=\left(\begin{matrix} 0 \\ 0 \\ 1\end{matrix}\right)$ &  $-1$     &  $x_1x_2$      & $\mathfrak{g}_{3,4}$\\
\hline $F=\left( \begin{matrix} 1 & 0 & 0\\ 0 & a & 0\\ 0 & 0 & 0 \end{matrix} \right)$ & 
$v=\left(\begin{matrix} 0 \\ 0 \\ 1\end{matrix}\right)$ &  $-\dfrac{1}{x_1^{a+1}}$      &  $\dfrac{x_1}{x_1^a}$       & $\mathfrak{g}^a_{3,5}$\\ 
\hline $F=\left( \begin{matrix} 0 & -1 & 0\\ 1 & 0 & 0\\ 0 & 0 & 0 \end{matrix} \right)$ & 
$v=\left(\begin{matrix} 0 \\ 0 \\ 1\end{matrix}\right)$ &  $2$     &  $x_1^2+x_2^2$      & $\mathfrak{g}_{3,6}$\\ 
\hline $F=\left( \begin{matrix} a & -1 & 0\\ 1 & a & 0\\ 0 & 0 & 0 \end{matrix} \right)$ & 
$v=\left(\begin{matrix} 0 \\ 0 \\ 1\end{matrix}\right)$ &  $2e^{2a \textrm{arctg}\frac{x_1}{x_2}}$     &  $(x_1^2+x_2^2)e^{2a \textrm{arctg}\frac{x_1}{x_2}}$      & $\mathfrak{g}^a_{3,7}$\\ 
\hline $F=\left( \begin{matrix} 0 & -2 & 0\\ 0 & 0 & 0\\ 0 & 0 & 0 \end{matrix} \right)$ & 
$v=\left(\begin{matrix} 0 \\ 0 \\ 1\end{matrix}\right)$ &  $1$     &  $x_1x_3+x_2^2$      & $\mathfrak{g}_{3,8}$\\ 
$G=\left( \begin{matrix} 1 & 0 & 0\\ 0 & 0 & 0\\ 0 & 0 & -1 \end{matrix} \right)$ & 
$w=\left(\begin{matrix} 0 \\ 1 \\ 0\end{matrix}\right)$&  &   & \\
\hline $F=\left( \begin{matrix} 0 & -1 & 0\\ 1 & 0 & 0\\ 0 & 0 & 0 \end{matrix} \right)$ & 
$v=\left(\begin{matrix} 0 \\ 0 \\ 1\end{matrix}\right)$ &   $2$    & $x_1^2+x_2^2+x_3^2$       & $\mathfrak{g}_{3,9}$\\ 
$G=\left( \begin{matrix} 0 & 0 & 1\\ 0 & 0 & 0\\ 0 & 0 & 0 \end{matrix} \right)$ & 
$w=\left(\begin{matrix} 0 \\ 1 \\ 0\end{matrix}\right)$&  &   & \\
\hline 
\end{tabular}
\endgroup  
\caption{\label{table 1}Linear mappings and their eigenvectors giving three dimensional Lie algebras.}
\end{table}

We now turn to the four-dimensional Lie algebras.

\begin{example}
Let us take $V=\mathbb{R}^4$ with a basis $\{e_1, e_2, e_3, e_4\}$. 
We will show how to easily connect four-dimensional real Lie algebras with the corresponding linear mappings and their eigenvectors. 
We will restrict ourselves to the eigenvector $v=( 0,0,0,1)^{\top}$. Lie brackets will be defined in the space $V^*=\left(\mathbb{R}^4\right)^{\top}$ with the dual base $\{e^*_1, e^*_2, e^*_3, e_4^*\}$.
\begin{enumerate}
\item If we take 
\begin{equation}
F=\left(
\begin{matrix}
\lambda_1 & 0         & 0         & 0\\
0         & \lambda_2 & 0         & 0\\
0         & 0         & \lambda_3 & 0 \\
0         & 0         & 0         & 0
\end{matrix}
\right),
\end{equation}
where $\lambda_1,\lambda_2,\lambda_3\in\mathbb{R}$,
we obtain the Lie bracket of the form
\begin{align}
 \label{b21} [\psi,\phi]_{F,v} &= \lambda_1\left( \psi_1\phi_4-\psi_4\phi_1\right)e_1^{*}
+\lambda_2\left( \psi_2\phi_4-\psi_4\phi_2\right)e_2^{*}\\
& +\lambda_3\left( \psi_3\phi_4-\psi_4\phi_3\right)e_3^{*},\nonumber
\end{align}
where $\psi=\psi_1 e_1^*+\psi_2 e_2^*+\psi_3 e_3^*+\psi_4 e_4^*$ and  $\phi=\phi_1 e_1^*+\phi_2 e_2^*+\phi_3 e_3^*+\phi_4 e_4^*$.
The nonzero commutator rules are following
\begin{equation}
[e_1^*,e_4^*]=\lambda_1 e_1^*,\quad [e_2^*,e_4^*]=\lambda_2 e_2^*,\quad [e_3^*,e_4^*]=\lambda_3 e_3^*.
\end{equation}
\begin{enumerate}
\item For $\lambda_1=1$, $\lambda_2=a$, $\lambda_3=b$, we recognize the Lie structure related to the Lie algebra 
$\mathfrak{g}_{4,5}^{a,b}$. The commutator rules for $\mathfrak{g}_{4,5}^{a,b}$ are $[e_1^*,e_4^*]=e_1^*$, $[e_2^*,e_4^*]=ae_2^*$, $[e_3^*,e_4^*]=b e_3^*$.
\item For $\lambda_3=0$, see Example \ref{example-1}.
\end{enumerate}
\item If we take 
\begin{equation}
F=\left(
\begin{matrix}
\lambda_1 & 0         & 0         & 0\\
1         & \lambda_1 & 0         & 0\\
0         & 1         & \lambda_1 & 0 \\
0         & 0         & 0         & 0
\end{matrix}
\right),
\end{equation}
where $\lambda_1\in\mathbb{R}$,
we obtain the Lie bracket of the form
\begin{align}
 \label{b22} [\psi,\phi]_{F,v} &= \left(\lambda_1\left( \psi_1\phi_4-\psi_4\phi_1\right)+ \psi_2\phi_4-\psi_4\phi_2\right)e_1^{*}\\
&+\left(\lambda_1\left( \psi_2\phi_4-\psi_4\phi_2\right)+ \psi_3\phi_4-\psi_4\phi_3\right)e_2^{*}
+\lambda_1\left( \psi_3\phi_4-\psi_4\phi_3\right)e_3^{*}.\nonumber
\end{align}
The nonzero commutator rules are following
\begin{equation}
[e_1^*,e_4^*]=\lambda_1 e_1^*,\quad [e_2^*,e_4^*]=e_1^*+\lambda_1 e_2^*,\quad [e_3^*,e_4^*]=e_2^*+\lambda_1 e_3^*.
\end{equation}
\begin{enumerate}
\item For $\lambda_1=0$,  we recognize the Lie structure related to the Lie algebra 
$\mathfrak{g}_{4,1}$. The commutator rules for $\mathfrak{g}_{4,1}$ are  $[e_2^*,e_4^*]=e_1^*$, $[e_3^*,e_4^*]= e_2^*$.
\item For $\lambda_1=1$,  we recognize the Lie structure related to the Lie algebra 
$\mathfrak{g}_{4,4}$. The commutator rules for $\mathfrak{g}_{4,4}$ are $[e_1^*,e_4^*]=e_1^*$, $[e_2^*,e_4^*]=e_1^*+e_2^*$, $[e_3^*,e_4^*]=e_2^*+ e_3^*$.
\end{enumerate}
\item If we take 
\begin{equation}
F=\left(
\begin{matrix}
\lambda_1 & 0         & 0         & 0\\
0         & \lambda_2 & 0         & 0\\
0         & 1         & \lambda_2 & 0 \\
0         & 0         & 0         & 0
\end{matrix}
\right),
\end{equation}
where $\lambda_1, \lambda_2\in\mathbb{R}$,
we obtain the Lie bracket of the form
\begin{align}
 \label{b23} [\psi,\phi]_{F,v} &= \lambda_1\left( \psi_1\phi_4-\psi_4\phi_1\right)e_1^{*}\\
&+\left(\lambda_2\left( \psi_2\phi_4-\psi_4\phi_2\right)+ \psi_3\phi_4-\psi_4\phi_3\right)e_2^{*}
+\lambda_2\left( \psi_3\phi_4-\psi_4\phi_3\right)e_3^{*}.\nonumber
\end{align}
The nonzero commutator rules are following
\begin{equation}
[e_1^*,e_4^*]=\lambda_1 e_1^*,\quad [e_2^*,e_4^*]=\lambda_2 e_2^*,\quad [e_3^*,e_4^*]=e_2^*+\lambda_2 e_3^*.
\end{equation}
\begin{enumerate}
\item For $\lambda_1=a$, $\lambda_2=1$,  we recognize the Lie structure related to the Lie algebra 
$\mathfrak{g}_{4,2}^a$. The commutator rules for $\mathfrak{g}_{4,2}^a$ are  $[e_1^*,e_4^*]=ae_1^*$, $[e_2^*,e_4^*]= e_2^*$, $[e_3^*,e_4^*]= e_2^*+e_3^*$.
\item For $\lambda_1=1$, $\lambda_2=0$,  we recognize the Lie structure related to the Lie algebra 
$\mathfrak{g}_{4,3}$. The commutator rules for $\mathfrak{g}_{4,3}$ are $[e_1^*,e_4^*]=e_1^*$, $[e_3^*,e_4^*]=e_2^*$.
\end{enumerate}
\item If we take 
\begin{equation}
F=\left(
\begin{matrix}
\lambda_1 & 0         & 0         & 0\\
0         & b         & -1        & 0\\
0         & 1         & b         & 0 \\
0         & 0         & 0         & 0
\end{matrix}
\right),
\end{equation}
where $\lambda_1, b\in\mathbb{R}$,
we obtain the Lie bracket of the form
\begin{align}
 \label{b24} [\psi,\phi]_{F,v} &= \lambda_1\left( \psi_1\phi_4-\psi_4\phi_1\right)e_1^{*}\\
& +\left(b\left( \psi_2\phi_4-\psi_4\phi_2\right)+ \psi_3\phi_4-\psi_4\phi_3\right)e_2^{*}\\
& +\left(-\psi_2\phi_4+\psi_4\phi_2+b\left( \psi_3\phi_4-\psi_4\phi_3\right)\right)e_3^{*}.\nonumber
\end{align}
The nonzero commutator rules are following
\begin{equation}
[e_1^*,e_4^*]=\lambda_1 e_1^*,\quad [e_2^*,e_4^*]=b e_2^*- e_3^*,\quad [e_3^*,e_4^*]=e_2^*+b e_3^*.
\end{equation}
\begin{enumerate}
\item For $\lambda_1=a$,   we recognize the Lie structure related to the Lie algebra 
$\mathfrak{g}_{4,6}^{a,b}$. The commutator rules for $\mathfrak{g}_{4,6}^{a,b}$ are  $[e_1^*,e_4^*]=ae_1^*$, $[e_2^*,e_4^*]= be_2^*-e_3^*$, $[e_3^*,e_4^*]= e_2^*+be_3^*$.
\item For $\lambda_1=0$,  see Example \ref{example-1}.
\end{enumerate}
\item If we take 
\begin{equation}
F=\left(
\begin{matrix}
0         & 0         & 0         & 0\\
0         & 0         & -1        & 0\\
0         & 1         & 0         & 0 \\
0         & 0         & 0         & 0
\end{matrix}
\right),
\end{equation}
and as the second linear mapping and its eigenvector we choose
\begin{equation}
G=\left(
\begin{matrix}
0         & 0         & 0         & 0\\
1         & 0         & 0         & 0\\
0         & 0         & 0         & 0 \\
0         & 0         & 0         & 0
\end{matrix}
\right),\quad 
w=\left(\begin{matrix}
0 \\ 0\\ 1\\ 0
\end{matrix}\right),
\end{equation}
then we obtain the Lie bracket of the form
\begin{align}
 \label{b25} [\psi,\phi]_{F,v, G,w} &= [\psi,\phi]_{F,v}+[\psi,\phi]_{ G,w}= \left( \psi_2\phi_3-\psi_3\phi_2\right)e_1^{*}\\
& +\left( \psi_3\phi_4-\psi_4\phi_3\right)e_2^{*}
-\left(\psi_2\phi_4+\psi_4\phi_2\right)e_3^{*}.\nonumber
\end{align}
The nonzero commutator rules are following
\begin{equation}
[e_2^*,e_3^*]= e_1^*,\quad [e_2^*,e_4^*]=- e_3^*,\quad [e_3^*,e_4^*]=e_2^*.
\end{equation}
Above we recognize the Lie structure related to the Lie algebra $\mathfrak{g}_{4,10}$. 
\item If we take 
\begin{equation}
F=\left(
\begin{matrix}
2         & 0         & 0         & 0\\
0         & 1         & 0         & 0\\
0         & 1         & 1         & 0 \\
0         & 0         & 0         & 0
\end{matrix}
\right),
\end{equation}
and as the second linear mapping and its eigenvector we choose
\begin{equation}
G=\left(
\begin{matrix}
0         & 0         & 0         & 0\\
1         & 0         & 0         & 0\\
0         & 0         & 0         & 0 \\
0         & 0         & 0         & 0
\end{matrix}
\right),\quad 
w=\left(\begin{matrix}
0 \\ 0\\ 1\\ 0
\end{matrix}\right),
\end{equation}
then we obtain the Lie bracket of the form
\begin{align}
 \label{b26} [\psi,\phi]_{F,v, G,w} &= [\psi,\phi]_{F,v}+[\psi,\phi]_{ G,w}\\ 
& =\left( \psi_2\phi_3-\psi_3\phi_2+2\left(\psi_1\phi_4-\psi_4\phi_1\right)\right)e_1^{*}\nonumber\\
& +\left( \psi_2\phi_4-\psi_4\phi_2+\psi_3\phi_4-\psi_4\phi_3\right)e_2^{*}
+\left(\psi_3\phi_4+\psi_4\phi_3\right)e_3^{*}.\nonumber
\end{align}
The nonzero commutator rules are following
\begin{equation}
[e_2^*,e_3^*]= e_1^*,\quad [e_1^*,e_4^*]=2 e_3^*,\quad [e_2^*,e_4^*]=e_2^*,\quad [e_3^*,e_4^*]=e_2^*+e_3^*.
\end{equation}
Above we recognize the Lie structure related to the Lie algebra $\mathfrak{g}_{4,7}$. 
\item If we take 
\begin{equation}
F=\left(
\begin{matrix}
0         & 0         & 0         & 0\\
0         & 1         & 0         & 0\\
0         & 0         & -1        & 0 \\
0         & 0         & 0         & 0
\end{matrix}
\right),
\end{equation}
and as the second linear mapping and its eigenvector we choose
\begin{equation}
G=\left(
\begin{matrix}
0         & 0         & 0         & 0\\
1         & 0         & 0         & 0\\
0        & 0         & 0         & 0 \\
0         & 0         & 0         & 0
\end{matrix}
\right),\quad 
w=\left(\begin{matrix}
0 \\ 0\\ 1\\ 0
\end{matrix}\right),
\end{equation}
then we obtain the Lie bracket of the form
\begin{align}
 \label{b27} [\psi,\phi]_{F,v, G,w} &= [\psi,\phi]_{F,v}+[\psi,\phi]_{ G,w}=\left( \psi_2\phi_3-\psi_3\phi_2\right)e_1^{*}\\
& +\left( \psi_2\phi_4-\psi_4\phi_2\right)e_2^{*}
-\left(\psi_3\phi_4+\psi_4\phi_3\right)e_3^{*}.\nonumber
\end{align}
The nonzero commutator rules are following
\begin{equation}
[e_2^*,e_3^*]= e_1^*,\quad [e_2^*,e_4^*]=e_2^*,\quad [e_3^*,e_4^*]=-e_3^*.
\end{equation}
Above we recognize the Lie structure related to the Lie algebra $\mathfrak{g}_{4,8}$. 
\item If we take 
\begin{equation}
F=\left(
\begin{matrix}
1+b         & 0         & 0         & 0\\
0         & 1         & 0         & 0\\
0         & 0         & b         & 0 \\
0         & 0         & 0         & 0
\end{matrix}
\right),
\end{equation}
and as the second linear mapping and its eigenvector we choose
\begin{equation}
G=\left(
\begin{matrix}
0         & 0         & 0         & 0\\
1         & 0         & 0         & 0\\
0        & 0         & 0         & 0 \\
0         & 0         & 0         & 0
\end{matrix}
\right),\quad 
w=\left(\begin{matrix}
0 \\ 0\\ 1\\ 0
\end{matrix}\right),
\end{equation}
then we obtain the Lie bracket of the form
\begin{align}
 \label{b28} [\psi,\phi]_{F,v, G,w} &= [\psi,\phi]_{F,v}+[\psi,\phi]_{ G,w}\\ 
& =\left( \psi_2\phi_3-\psi_3\phi_2+(1+b)\left(\psi_1\phi_4-\psi_4\phi_1\right)\right)e_1^{*}\nonumber\\
& +\left( \psi_2\phi_4-\psi_4\phi_2\right)e_2^{*}
+b\left(\psi_3\phi_4+\psi_4\phi_3\right)e_3^{*}.\nonumber
\end{align}
The nonzero commutator rules are following
\begin{equation}
[e_2^*,e_3^*]= e_1^*,\quad [e_1^*,e_4^*]=(1+b) e_1^*,\quad [e_2^*,e_4^*]=e_2^*,\quad [e_3^*,e_4^*]=be_3^*.
\end{equation}
Above we recognize the Lie structure related to the Lie algebra $\mathfrak{g}_{4,9}^b$. 
\item If we take 
\begin{equation}
F=\left(
\begin{matrix}
2a        & 0         & 0         & 0\\
0         & a         & -1         & 0\\
0         & 1         & a         & 0 \\
0         & 0         & 0         & 0
\end{matrix}
\right),
\end{equation}
and as the second linear mapping and its eigenvector we choose
\begin{equation}
G=\left(
\begin{matrix}
0         & 0         & 0         & 0\\
1         & 0         & 0         & 0\\
0         & 0         & 0         & 0 \\
0         & 0         & 0         & 0
\end{matrix}
\right),\quad 
w=\left(\begin{matrix}
0 \\ 0\\ 1\\ 0
\end{matrix}\right),
\end{equation}
then we obtain the Lie bracket of the form
\begin{align}
 \label{b29} [\psi,\phi]_{F,v, G,w} &= [\psi,\phi]_{F,v}+[\psi,\phi]_{ G,w}\\ 
& =\left( \psi_2\phi_3-\psi_3\phi_2+2a\left(\psi_1\phi_4-\psi_4\phi_1\right)\right)e_1^{*}\nonumber\\
& +\left( a\left(\psi_2\phi_4-\psi_4\phi_2\right)+\psi_3\phi_4-\psi_4\phi_3\right)e_2^{*}\nonumber\\
& +\left(-\psi_2\phi_4+\psi_4\phi_2+a\left(\psi_3\phi_4+\psi_4\phi_3\right)\right)e_3^{*}.\nonumber
\end{align}
The nonzero commutator rules are following
\begin{equation}
[e_2^*,e_3^*]= e_1^*,\quad [e_1^*,e_4^*]=2 a e_1^*,\quad [e_2^*,e_4^*]=ae_2^*-e_3^*,\quad [e_3^*,e_4^*]=e_2^*+ae_3^*.
\end{equation}
Above we recognize the Lie structure related to the Lie algebra $\mathfrak{g}_{4,11}^a$. 
\item If we take 
\begin{equation}
F=\left(
\begin{matrix}
0         & -1        & 0         & 0\\
1         & 0         & 0         & 0\\
0         & 0         & 0         & 0 \\
0         & 0         & 0         & 0
\end{matrix}
\right),
\end{equation}
and as the second linear mapping and its eigenvector we choose
\begin{equation}
G=\left(
\begin{matrix}
1         & 0         & 0         & 0\\
0         & 1         & 0         & 0\\
0         & 0         & 0         & 0 \\
0         & 0         & 0         & 0
\end{matrix}
\right),\quad 
w=\left(\begin{matrix}
0 \\ 0\\ 1\\ 0
\end{matrix}\right),
\end{equation}
then we obtain the Lie bracket of the form
\begin{align}
 \label{b210} [\psi,\phi]_{F,v, G,w} &= [\psi,\phi]_{F,v}+[\psi,\phi]_{ G,w}\\ 
& =\left( \psi_1\phi_3-\psi_3\phi_1+\psi_2\phi_4-\psi_4\phi_2\right)e_1^{*}\nonumber\\
& +\left( \psi_2\phi_3-\psi_3\phi_2+\psi_1\phi_4-\psi_4\phi_1\right)e_2^{*}.\nonumber
\end{align}
The nonzero commutator rules are following
\begin{equation}
[e_1^*,e_3^*]= e_1^*,\quad[e_2^*,e_3^*]= e_2^*,\quad [e_1^*,e_4^*]=- e_2^*,\quad [e_2^*,e_4^*]=e_1^*.
\end{equation}
Above we recognize the Lie structure related to the Lie algebra $\mathfrak{g}_{4,12}$. 
\end{enumerate}

\end{example}

In conclusion, we have obtained all four-dimensional Lie algebras, see Table \ref{table 2}.
To generate six of them, we had to use one linear mapping and its eigenvector.
The remaining six algebras required the use of two linear mappings and their eigenvectors.\\

\begin{table}
\begingroup\makeatletter\def\f@size{6}\check@mathfonts
\begin{tabular}{|c|c|c|c|c|c|c|c|}
\hline   $F$                    &      $v$       & $\textrm{Name}$ & $F$ &  $v$ & $G$      & $w$  & $\textrm{Name}$  \\
\hline $F=\left( \begin{matrix} 0 & 0 & 0 & 0\\ 1 & 0 & 0 & 0\\ 0 & 1 & 0 & 0\\ 0 & 0 & 0 & 0\end{matrix} \right)$ & 
$v=\left(\begin{matrix} 0 \\ 0 \\ 0 \\ 1\end{matrix}\right)$ &   $\mathfrak{g}_{4,1}$ 
& $F=\left( \begin{matrix} 2 & 0 & 0 & 0\\ 0 & 1 & 0 & 0\\ 0 & 1 & 1 & 0\\ 0 & 0 & 0 & 0\end{matrix} \right)$ & 
$v=\left(\begin{matrix} 0 \\ 0 \\ 0 \\ 1\end{matrix}\right)$ & 
$G=\left( \begin{matrix} 0 & 0 & 0 & 0\\ 1 & 0 & 0 & 0\\ 0 & 0 & 0 & 0 \\ 0 & 0 & 0 & 0 \end{matrix} \right)$  &  
$w=\left(\begin{matrix} 0 \\ 0 \\ 1 \\ 0\end{matrix}\right)$      & $\mathfrak{g}_{4,7}$
\\
\hline $F=\left( \begin{matrix} a & 0 & 0 & 0\\ 0 & 1 & 0 & 0\\ 0 & 1 & 1 & 0\\ 0 & 0 & 0 & 0\end{matrix} \right)$ & 
$v=\left(\begin{matrix} 0 \\ 0 \\ 0 \\ 1\end{matrix}\right)$ &       $\mathfrak{g}^a_{4,2}$
& $F=\left( \begin{matrix} 0 & 0 & 0 & 0\\ 0 & 1 & 0 & 0\\ 0 & 0 & -1 & 0\\ 0 & 0 & 0 & 0\end{matrix} \right)$ & 
$v=\left(\begin{matrix} 0 \\ 0 \\ 0 \\ 1\end{matrix}\right)$ & 
$G=\left( \begin{matrix} 0 & 0 & 0 & 0\\ 1 & 0 & 0 & 0\\ 0 & 0 & 0 & 0 \\ 0 & 0 & 0 & 0 \end{matrix} \right)$  &  
$w=\left(\begin{matrix} 0 \\ 0 \\ 1 \\ 0\end{matrix}\right)$      & $\mathfrak{g}_{4,8}$
\\
\hline $F=\left( \begin{matrix} 1 & 0 & 0 & 0\\ 0 & 0 & 0 & 0\\ 0 & 1 & 0 & 0\\ 0 & 0 & 0 & 0\end{matrix} \right)$ & 
$v=\left(\begin{matrix} 0 \\ 0 \\ 0 \\ 1\end{matrix}\right)$ &       $\mathfrak{g}_{4,3}$
& $F=\left( \begin{matrix} 1+b & 0 & 0 & 0\\ 0 & 1 & 0 & 0\\ 0 & 0 & b & 0\\ 0 & 0 & 0 & 0\end{matrix} \right)$ & 
$v=\left(\begin{matrix} 0 \\ 0 \\ 0 \\ 1\end{matrix}\right)$ & 
$G=\left( \begin{matrix} 0 & 0 & 0 & 0\\ 1 & 0 & 0 & 0\\ 0 & 0 & 0 & 0 \\ 0 & 0 & 0 & 0 \end{matrix} \right)$  &  
$w=\left(\begin{matrix} 0 \\ 0 \\ 1 \\ 0\end{matrix}\right)$      & $\mathfrak{g}^b_{4,9}$
\\
\hline $F=\left( \begin{matrix} 1 & 0 & 0 & 0\\ 1 & 1 & 0 & 0\\ 0 & 1 & 1 & 0\\ 0 & 0 & 0 & 0\end{matrix} \right)$ & 
$v=\left(\begin{matrix} 0 \\ 0 \\ 0 \\ 1\end{matrix}\right)$ &      $\mathfrak{g}_{4,4}$
& $F=\left( \begin{matrix} 0 & 0 & 0 & 0\\ 0 & 0 & -1 & 0\\ 0 & 1 & 0 & 0\\ 0 & 0 & 0 & 0\end{matrix} \right)$ & 
$v=\left(\begin{matrix} 0 \\ 0 \\ 0 \\ 1\end{matrix}\right)$ & 
$G=\left( \begin{matrix} 0 & 0 & 0 & 0\\ 1 & 0 & 0 & 0\\ 0 & 0 & 0 & 0 \\ 0 & 0 & 0 & 0 \end{matrix} \right)$  &  
$w=\left(\begin{matrix} 0 \\ 0 \\ 1 \\ 0\end{matrix}\right)$      & $\mathfrak{g}_{4,10}$
\\
\hline $F=\left( \begin{matrix} 1 & 0 & 0 & 0\\ 0 & a & 0 & 0\\ 0 & 0 & b & 0\\ 0 & 0 & 0 & 0\end{matrix} \right)$ & 
$v=\left(\begin{matrix} 0 \\ 0 \\ 0 \\ 1\end{matrix}\right)$ &        $\mathfrak{g}^{ab}_{4,5}$
& $F=\left( \begin{matrix} 2a & 0 & 0 & 0\\ 0 & a & -1 & 0\\ 0 & 1 & a & 0\\ 0 & 0 & 0 & 0\end{matrix} \right)$ & 
$v=\left(\begin{matrix} 0 \\ 0 \\ 0 \\ 1\end{matrix}\right)$ & 
$G=\left( \begin{matrix} 0 & 0 & 0 & 0\\ 1 & 0 & 0 & 0\\ 0 & 0 & 0 & 0 \\ 0 & 0 & 0 & 0 \end{matrix} \right)$  &  
$w=\left(\begin{matrix} 0 \\ 0 \\ 1 \\ 0\end{matrix}\right)$      & $\mathfrak{g}_{4,11}$
\\
\hline $F=\left( \begin{matrix} a & 0 & 0 & 0\\ 0 & b & -1 & 0\\ 0 & 1 & b & 0\\ 0 & 0 & 0 & 0\end{matrix} \right)$ & 
$v=\left(\begin{matrix} 0 \\ 0 \\ 0 \\ 1\end{matrix}\right)$ &        $\mathfrak{g}^{ab}_{4,6}$
& $F=\left( \begin{matrix} 0 & -1 & 0 & 0\\ 1 & 0 & 0 & 0\\ 0 & 0 & 0 & 0\\ 0 & 0 & 0 & 0\end{matrix} \right)$ & 
$v=\left(\begin{matrix} 0 \\ 0 \\ 0 \\ 1\end{matrix}\right)$ & 
$G=\left( \begin{matrix} 1 & 0 & 0 & 0\\ 0 & 1 & 0 & 0\\ 0 & 0 & 0 & 0 \\ 0 & 0 & 0 & 0 \end{matrix} \right)$  &  
$w=\left(\begin{matrix} 0 \\ 0 \\ 1 \\ 0\end{matrix}\right)$      & $\mathfrak{g}_{4,12}$
\\
\hline 
\end{tabular}
\endgroup  
\caption{\label{table 2}Linear mappings and their eigenvectors giving four dimensional Lie algebras.}
\end{table}

We see that in the case of building three- and four-dimensional Lie algebras (their classification), the first "layer" of linear mapping is closely related to the classification of equilibrium points for a linear homogeneous system of differential equations with constant coefficients.
We can assume that the first "layer" is related to the Jordan form of the map $F$ (more precisely with the class of such maps).

\section{General structure}
\label{s5}

In this section, we study a general situation of arbitrary finite dimension.
We answer the question how a linear mapping 
and its eigenvector or linear mappings and their eigenvectors can be assigned to a Lie algebra, 
when its structure constants are known. 
We also find the geometric formula for Casimir functions in the language of pairs $(F_i,v_i)$, where $F_i\in End(V)$ and $v_i$  is the eigenvector of the mapping $F_i$.

Let $\{e_1, \dots, e_n\}$ be a basis of $\mathfrak{g}$.
As is well known, Lie algebra structure is given by commutation relations $[e_i,e_j]=\sum_{k=1}^{n}c_{ij}^ke_k$, where $c_{ij}^k$ are structure constants.
In the first step, suppose that we have nonzero commutation relations for the basis vector  $e_n$ of the form
\begin{equation}
[e_i,e_n]=\sum_{k=1}^{n}c_{in}^k e_k, \quad i=1,\dots,n-1.
\end{equation}
We define a linear mapping $F_1$ in the following form (we identify $V\cong V^*\cong \mathbb{R}^n$)
\begin{equation}
F_1=\left(
\begin{array}{cccc|c}
c_{1n}^1         & c_{1n}^2      & \hdots        & c_{1n}^{n-1}     &  0\\
c_{2n}^1         & c_{2n}^2      & \hdots        & c_{2n}^{n-1}     & 0\\
\vdots           & \vdots        & \ddots        & \vdots           & \vdots\\
c_{n-1\;n}^1     & c_{n-1\;n}^2  & \hdots        & c_{n-1\;n}^{n-1}  & 0\\
\hline
0                & 0             & \hdots        & 0                & 0
\end{array}
\right).
\end{equation} 
We chose the eigenvector $v_1=(0,\dots, 0, 1)^{\top}$ for the mapping $F_1$ corresponds 
as the vector $e_n$. In the above matrix, the structure constants $c_{in}^n$, $i=1,\dots, n-1$ do not appear.
They will be placed in the next mappings $F_2, \dots, F_n$. To be precise, $c_{in}^n$ will appear in the mapping $F_i$.

In the second step, we have only $n-1$ basis vectors $\{e_1, \dots, e_{n-1}\}$. Again, 
we define next linear mapping  to be:
\begin{equation}
F_2=\left(
\begin{array}{cccc|c|c}
c_{1\;n-1}^1         & c_{1\;n-1}^2      & \hdots        & c_{1\;n-1}^{n-2}     & 0 & c^n_{1\; n-1}\\
c_{2\;n-1}^1         & c_{2\;n-1}^2      & \hdots        & c_{2\;n-1}^{n-2}     & 0 & c^n_{2\; n-1}\\
\vdots               & \vdots            & \ddots        & \vdots               & \vdots & \vdots\\
c_{n-2\;n-1}^1       & c_{n-2\;n-1}^2    & \hdots        & c_{n-2\;n-1}^{n-2}   & 0 & c^n_{n-2\; n-1}\\
\hline
0                    & 0                 & \hdots        & 0                    & 0 & 0\\
\hline
0                    & 0                 & \hdots        & 0                    & 0 & -c^n_{n-1\; n}
\end{array}
\right).
\end{equation} 
We chose the eigenvector $v_2=(0,\dots, 0, 1,0)^{\top}$ for the mapping $F_2$ to be the vector $e_{n-1}$ and its commutation relations are
\begin{equation}
[e_i,e_{n-1}]=\sum_{k=1}^{n}c_{in-1}^k e_k, \quad i=1,\dots,n-2.
\end{equation}

We repeat this procedure as long as it is possible until we obtain the mapping
\begin{equation}
F_{n-i+1}=\left(
\begin{array}{cccc|c|ccc}
c_{1\;i}^1         & c_{1\;i}^2      & \hdots        & c_{1\;i}^{i-1}     & 0 & c^{i+1}_{1\; i} & \hdots &   c^{n}_{1\; i}    \\
c_{2\;i}^1         & c_{2\;i}^2      & \hdots        & c_{2\;i}^{i-1}     & 0 & c^{i+1}_{2\; i}  & \hdots &   c^{n}_{2\; i} \\
\vdots               & \vdots            & \ddots        & \vdots               & \vdots & \vdots  & \ddots & \vdots\\
c_{i-1\;i}^1       & c_{i-1\;i}^2    & \hdots        & c_{i-1\;i}^{i-1}   & 0 & c^{i+1}_{i-1\; i} & \hdots &   c^{n}_{i-1\; i} \\
\hline
0                    & 0                 & \hdots        & 0                    & 0 & 0         & \hdots & 0 \\
\hline
0                    & 0                 & \hdots        & 0                    & 0 & -c^{i+1}_{i\; i+1}           & \hdots & 0 \\
\vdots               & \vdots            & \ddots        & \vdots               & \vdots & \vdots  & \ddots & \vdots\\
0                    & 0                 & \hdots        & 0                    & 0      & 0       & 0 & -c^n_{i\; n}
\end{array}
\right)
\end{equation} 
with eigenvector $v_{n-i+1} = e_i$.

Finally, after $n$ steps, we come to $e_1$ corresponding to the eigenvector $v_{n}=(1,0,\dots, 0)^{\top}$ for the mapping
\begin{equation}
F_{n}=\left(
\begin{array}{c|ccc}
 0 & 0         & \hdots & 0 \\
 \hline
 0 & -c^{2}_{1\; 2}           & \hdots & 0 \\
 \vdots & \vdots  & \ddots & \vdots\\
0      & 0       & 0 & -c^n_{1\; n}
\end{array}
\right).
\end{equation} 

This means that we can associate $n$ linear mappings $F_1,F_2,\dots,F_{n}$ and their eigenvectors $v_1,v_2,\dots, v_n$  with an $n$-dimensional Lie algebra. These mappings and their eigenvectors in turn give a Lie bracket on $\mathbb{R}^n$ 
\begin{equation}
\label{3-1}
[\psi,\phi]_{F_1, v_1,\dots, F_n,v_n}=[\psi,\phi]_{F_1,v_1}+[\psi,\phi]_{F_2,v_2}+\ldots +[\psi,\phi]_{F_{n},v_{n}}.
\end{equation}
Summarizing the above considerations, we have the following theorem.

\begin{theorem}
\label{theorem-1}
Every Lie algebra $(\mathfrak{g},[\cdot,\cdot])$ is isomorphic to the corresponding Lie algebra 
$(\mathbb{R}^n, [\cdot,\cdot]_{F_1, v_1,\dots, F_n,v_n})$.
\end{theorem}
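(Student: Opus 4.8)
The plan is to write down the obvious candidate isomorphism and check that it intertwines the two brackets on a basis; transport of structure then yields, in one stroke, that the bracket $[\cdot,\cdot]_{F_1,v_1,\dots,F_n,v_n}$ satisfies the Jacobi identity (which a sum of brackets of the form (\ref{b1}) need not a priori do) and that $(\mathbb R^n,[\cdot,\cdot]_{F_1,v_1,\dots,F_n,v_n})$ is isomorphic to $\mathfrak g$. Keep the basis $\{e_1,\dots,e_n\}$ of $\mathfrak g$ from which the pairs $(F_k,v_k)$ were built, let $\{e_1^*,\dots,e_n^*\}$ be the dual basis of $\mathbb R^n=V^*$, and define $\Phi\colon\mathfrak g\to\mathbb R^n$ by $\Phi(e_i)=e_i^*$, extended linearly. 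Since $\Phi$ is a linear bijection, it suffices to prove $[\Phi(e_a),\Phi(e_b)]_{F_1,v_1,\dots,F_n,v_n}=\Phi([e_a,e_b])$ for $a<b$: bilinearity extends this to all pairs, and then the right-hand bracket is the push-forward along $\Phi$ of a genuine Lie bracket, so it automatically inherits the Jacobi identity and $\Phi$ becomes a Lie algebra isomorphism.

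For the computation, combine (\ref{3-1}) with (\ref{b1}) to get
\[
[e_a^*,e_b^*]_{F_1,v_1,\dots,F_n,v_n}=\sum_{k=1}^n\Bigl(e_b^*(v_k)\,F_k^*(e_a^*)-e_a^*(v_k)\,F_k^*(e_b^*)\Bigr).
\]
By construction $v_k=e_{n-k+1}$, so $e_b^*(v_k)=\delta_{k,\,n-b+1}$ and $e_a^*(v_k)=\delta_{k,\,n-a+1}$; the sum therefore collapses to
\[
[e_a^*,e_b^*]_{F_1,v_1,\dots,F_n,v_n}=F_{n-b+1}^*(e_a^*)-F_{n-a+1}^*(e_b^*).
\]
Identifying $F_k^*$ with the transpose of $F_k$, the covector $F_k^*(e_j^*)$ is just the $j$-th row of $F_k$. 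In the mapping $F_{n-b+1}$ (whose eigenvector is $e_b$) the row indexed by $a<b$ records precisely the coefficients $c_{ab}^q$ for $q\neq b$, so $F_{n-b+1}^*(e_a^*)=\sum_{q\neq b}c_{ab}^q e_q^*$; in the mapping $F_{n-a+1}$ (whose eigenvector is $e_a$) the row indexed by $b>a$ sits in the lower diagonal block and has the single nonzero entry $-c_{ab}^b$ in slot $b$, so $F_{n-a+1}^*(e_b^*)=-c_{ab}^b e_b^*$. Subtracting, $[e_a^*,e_b^*]_{F_1,v_1,\dots,F_n,v_n}=\sum_q c_{ab}^q e_q^*=\Phi([e_a,e_b])$, which is exactly what was needed.

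The heart of the argument is the bookkeeping in the last step: one must track where each structure constant has been stored and with which sign --- the ``off-diagonal'' data $c_{ab}^q$ with $q\neq b$ lives in the mapping attached to the larger index $b$, while the ``self-coefficient'' $c_{ab}^b$ lives, with a minus sign, in the mapping attached to the smaller index $a$ --- and one should separately confirm the extreme cases $a=1$ and $b=n$, where one of the two contributing mappings is $F_n=\mathrm{diag}(0,-c_{12}^2,\dots,-c_{1n}^n)$ or $F_1$. Once the two contributions are correctly matched the identity closes on the nose, and nothing else is required, the Jacobi identity being inherited from $\mathfrak g$.
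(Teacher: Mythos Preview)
Your proof is correct and follows essentially the same approach as the paper: the paper simply states that the result follows from ``writing explicitly commutation relations for $[\cdot,\cdot]_{F_1,v_1,\dots,F_n,v_n}$ and comparing them with commutation relations for $\mathfrak g$,'' and you carry out precisely that comparison in detail, correctly tracking where each structure constant $c_{ab}^q$ is stored among the mappings $F_k$. Your additional remark that the Jacobi identity for the sum bracket is inherited via transport of structure is a welcome clarification the paper leaves implicit.
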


Proof follows from writing explicitly commutation relations for $[\cdot,\cdot]_{F_1, v_1,\dots, F_n,v_n}$ and comparing them with commutation relations for $\mathfrak g$.

The isomorphism $(\mathfrak{g},[\cdot,\cdot])\cong (\mathbb{R}^n, [\cdot,\cdot]_{F_1, v_1,\dots, F_n,v_n})$ is not canonical, we can assign the linear mappings and their eigenvectors differently. 

Each bracket (\ref{b1}) in the sum (\ref{3-1}) can be rewritten in the form
\begin{equation}
\label{next}
[\psi,\phi]_{F_i, v_i}(\cdot)=\left<F_i(\cdot)\wedge v_i|\psi \otimes  \phi\right>.
\end{equation}
Using the formulas (\ref{bp}) and (\ref{next}), we can write (\ref{3-1}) in the language of Poisson brackets 
\begin{equation}
\label{f1}
\{f,g\}_{F_1, v_1,\dots, F_n,v_n}({\bf x})=\left< \sum_{i=1}^n \left(F_i({\bf x})\wedge v_i\right) |d f({\bf x})\otimes d g({\bf x})\right>, \quad f,g\in C^{\infty}\left(\mathbb{R}^n\right),
\end{equation}
where ${\bf x}=(x_1,\dots, x_n)$.
In the above formula $F_i({\bf x})\wedge v_i\in \bigwedge^2 V$ and $d f({\bf x})\wedge d g({\bf x})\in \bigwedge^2 V^{*}$.
Of course, we can explicitly assign a two-vector $\nabla  f({\bf x})\wedge \nabla  g({\bf x})\in \bigwedge^2 V$ for a two-form $d f({\bf x})\wedge d g({\bf x})$.
Since we have identified $V$ with $\mathbb{R}^n$ so we have the scalar product which is symmetric.
It extends naturally to the space of $k$-vectors $\bigwedge^k V$
\begin{equation}
\left<w|t\right>=det \left( \left<w_i|t_j\right> \right),
\end{equation}
where $i,j=1, \dots, k$ and $w=w_1\wedge \dots \wedge w_k$, $t=t_1\wedge \dots \wedge t_k$.
Finally, using the Hodge star operator $*:\bigwedge^2 V\longrightarrow \bigwedge^{n-2} V$, which has the following property
\begin{equation}
\left<w|t\right>=det \left(w\wedge *t\right), \quad w,t\in \bigwedge^2 V,
\end{equation}
formula (\ref{f1})  can be rewritten as 
\begin{equation}
\label{f2}
\{f,g\}_{F_1, v_1,\dots, F_n,v_n}({\bf x})=det \left( \nabla f({\bf x})\wedge \nabla g({\bf x})\wedge *\sum_{i=1}^n\left(F_i({\bf x})\wedge v_i\right)\right).
\end{equation}	
Summarizing the above considerations, we can present the following theorem describing Casimir functions.

\begin{theorem}
\label{theorem-2}
Casimir functions $c_i$, $i=1,\dots,k$, for the Lie algebra $(\mathbb{R}^n, [\psi,\phi]_{F_1, v_1,\dots, F_n,v_n})$ satisfy the following equation 
\begin{equation}
\label{cas-n}
\nabla c_i\wedge *\sum_{j=1}^n \left(F_j({\bf x})\wedge v_j \right)=0.
\end{equation}
\end{theorem}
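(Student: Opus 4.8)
The plan is to translate the defining property of Casimir functions, namely $\{c_i, f\} = 0$ for all $f \in C^\infty(\mathbb{R}^n)$, directly into the geometric language provided by formula (\ref{f2}). First I would recall that by Theorem \ref{theorem-1} the Lie--Poisson bracket on $(\mathbb{R}^n, [\cdot,\cdot]_{F_1,v_1,\dots,F_n,v_n})$ is exactly the one given in (\ref{f1}), and hence admits the determinantal form (\ref{f2}). So $c_i$ is a Casimir function if and only if
\begin{equation}
\det\left( \nabla c_i({\bf x}) \wedge \nabla f({\bf x}) \wedge *\sum_{j=1}^n \left(F_j({\bf x}) \wedge v_j\right)\right) = 0 \quad \text{for all } f \in C^\infty(\mathbb{R}^n).
\end{equation}

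The key step is then a pointwise linear-algebra argument: fix ${\bf x}$ and set $\omega := \nabla c_i({\bf x}) \wedge *\sum_{j=1}^n (F_j({\bf x}) \wedge v_j) \in \bigwedge^{n-1} V$. The Casimir condition says that $\nabla f({\bf x}) \wedge \omega = 0$ for every choice of the covector $\nabla f({\bf x})$, since $f$ ranges over all smooth functions and its gradient at a single point can be prescribed arbitrarily. A standard fact about the exterior algebra is that an $(n-1)$-vector $\omega$ satisfying $u \wedge \omega = 0$ for all $u \in V$ must itself vanish (equivalently, wedging with a fixed nonzero vector is injective on $\bigwedge^{n-1}V \to \bigwedge^n V$ only when $\omega$ is a multiple of that vector's complement — running over all $u$ forces $\omega = 0$). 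Hence
\begin{equation}
\nabla c_i({\bf x}) \wedge *\sum_{j=1}^n \left(F_j({\bf x}) \wedge v_j\right) = 0
\end{equation}
at every point ${\bf x}$, which is precisely (\ref{cas-n}). The converse is immediate: if this wedge vanishes then so does the determinant in (\ref{f2}) for every $f$, so $c_i$ is a Casimir function.

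The main subtlety — though it is more a matter of care than of real difficulty — is justifying the passage from ``the determinant vanishes for all $f$'' to ``the $(n-1)$-vector itself vanishes.'' One must be slightly careful because $*\sum_j (F_j({\bf x}) \wedge v_j)$ is an $(n-2)$-vector, so $\nabla c_i({\bf x})$ wedged with it lives in $\bigwedge^{n-1}V$, and $\nabla f({\bf x})$ wedged with that lives in $\bigwedge^n V \cong \mathbb{R}$; the determinant notation in (\ref{f2}) is really this top-degree component read off against the fixed volume form. So the cleanest route is to phrase everything in terms of $\bigwedge^\bullet V$, observe that the bilinear pairing $(u, \omega) \mapsto u \wedge \omega$ from $V \times \bigwedge^{n-1}V \to \bigwedge^n V$ is a perfect pairing (its matrix in dual bases is, up to sign, the identity), and conclude nondegeneracy. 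I would also note that (\ref{cas-n}) holds identically in ${\bf x}$, not just at isolated points, since the argument is pointwise and ${\bf x}$ was arbitrary; the equation should be read as an identity of $\bigwedge^{n-1}V$-valued functions on $\mathbb{R}^n$.
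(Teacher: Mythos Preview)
Your proof is correct and follows exactly the approach the paper indicates: use the determinantal form (\ref{f2}) of the Poisson bracket together with the defining property of a Casimir function. The paper's own proof is a single sentence to this effect; you have simply supplied the linear-algebra justification (non-degeneracy of the pairing $V\times\bigwedge^{n-1}V\to\bigwedge^n V$) that the paper leaves implicit, and you also note the trivial converse.
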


\begin{proof}
The proof of the theorem follows directly from the Poisson bracket form (\ref{f2}) and the definition of the Casimir function. 
\end{proof}

If only the constant function satisfies the equation (\ref{cas-n}), then the Lie algebra has no Casimir functions.
In the case when the algebra has $n-2$ Casimir functions, they also satisfy the following equation
\begin{equation}
\nabla c_1\wedge \nabla c_2\wedge \dots \wedge \nabla c_{n-2}=l({\bf x})\left(*\sum_{i=1}^n \left(F_i({\bf x})\wedge v_i\right) \right)
\end{equation}
for some function $l\in C^{\infty}(\mathbb{R}^n)$. This function plays the role of an integrating factor.

\section*{Acknowledgments}

The second author G.J. was partially supported by National Science Centre, Poland project 2020/01/Y/ST1/00123.

\bibliographystyle{plain}

\end{document}